\documentclass[reprint,nofootinbib,superscriptaddress,amsmath,amssymb,aps,]{revtex4-2}

\usepackage{graphicx} 
\usepackage{dcolumn}
\usepackage{xr}
\usepackage{bm}
\usepackage{lipsum}
\usepackage{wrapfig,lipsum,booktabs}
\usepackage[utf8]{inputenc} 
\usepackage[T1]{fontenc}    
\usepackage{hyperref}       
\usepackage{url}            
\usepackage{booktabs}       
\usepackage{amsmath}
\usepackage{amsfonts} 
\usepackage{amssymb}
\usepackage{nicefrac}       
\usepackage{microtype}      
\usepackage{xcolor}         
\usepackage{bm}
\usepackage{wrapfig}
\usepackage{comment}
\usepackage{mathtools}
\usepackage{amsthm}
\usepackage[capitalise]{cleveref}
\crefname{appendix}{App.}{Apps.}
\usepackage{autonum}

\labelformat{subsection}{\thesection#1}
\newtheorem{theorem}{Theorem}[subsection]
\newtheorem{corollary}[theorem]{Corollary}
\newtheorem{proposition}[theorem]{Proposition}

\usepackage{algorithm}
\usepackage[noend]{algpseudocode}

\begin{document}

\title{Learning biophysical models of gene regulation with probability flow matching}
\author{Suryanarayana Maddu}
\altaffiliation{These authors contributed equally to this work}
\affiliation{Center for Computational Biology, Flatiron Institute, New York, NY, USA, 10010}
\affiliation{Molecular $\&$ Cellular
Biology Department, Harvard University, Cambridge, MA 02138}
\author{Victor Chard\`es}
\altaffiliation{These authors contributed equally to this work}
\affiliation{Center for Computational Biology, Flatiron Institute, New York, NY, USA, 10010}
\affiliation{Molecular $\&$ Cellular
Biology Department, Harvard University, Cambridge, MA 02138}
\author{Michael J. Shelley}
\affiliation{Center for Computational Biology, Flatiron Institute, New York, NY, USA, 10010}
\affiliation{Courant Institute of Mathematical Sciences, New York University, New York, NY, USA, 10012}

\begin{abstract}

Cellular differentiation is governed by gene regulatory networks, the high-dimensional stochastic biochemical systems that determine the transcriptional landscape and mediate cellular responses to signals and perturbations. Although single-cell RNA sequencing provides quantitative snapshots of the transcriptome, current methods for inferring gene-regulatory dynamics often lack mechanistic interpretability and fail to generalize to unseen conditions. Here we introduce Probability Flow Matching (PFM), a scalable framework for learning biophysically consistent stochastic processes directly from time-resolved single-cell measurements. Applying PFM to three hematopoiesis datasets, we show that models with similar interpolation accuracy can encode fundamentally different dynamics, with only biophysically consistent formulations accurately capturing mechanisms of lineage transitions, fate specification, and gene perturbation responses. We further demonstrate that PFM accommodates unbalanced populations, enabling simultaneous inference of cellular proliferation and death dynamics. Together, these results establish PFM as a flexible, scalable framework for integrating mechanistic modeling with single-cell omics.

\end{abstract}
\maketitle

The molecular programs operating in single cells during homeostasis and the changes these programs undergo during differentiation, development, and disease progression are central research themes of contemporary biology \cite{moris2016transition, trapnell2015defining}. Nonetheless, we remain far from a comprehensive understanding of how molecular programs are implemented during cellular differentiation and cell fate decisions, even in well-characterized systems such as hematopoiesis \cite{velten2017human, tusi2018population}. Recent advances in single-cell multi-omics technologies have revolutionized our ability to profile the molecular state of individual cells, capturing diverse features such as gene expression, chromatin accessibility, and other regulatory layers, with unprecedented resolution \cite{macosko2015highly, buenrostro2015single, bintu2018super}. Notably, single-cell RNA-seq has already reshaped our understanding of cell fate decisions, challenging classical and longstanding models of cellular differentiation \cite{laurenti2018haematopoietic, sagar2020deciphering}. As these technologies increasingly enable the simultaneous measurements through multiple modalities within the same cell, there is a growing need to develop computational modeling frameworks to integrate and analyze these high-dimensional datasets \cite{gayoso2021joint, kang2022roadmap, cai2022machine}.\\

\begin{figure*}
  \centering
  \includegraphics[width = 17.5cm]{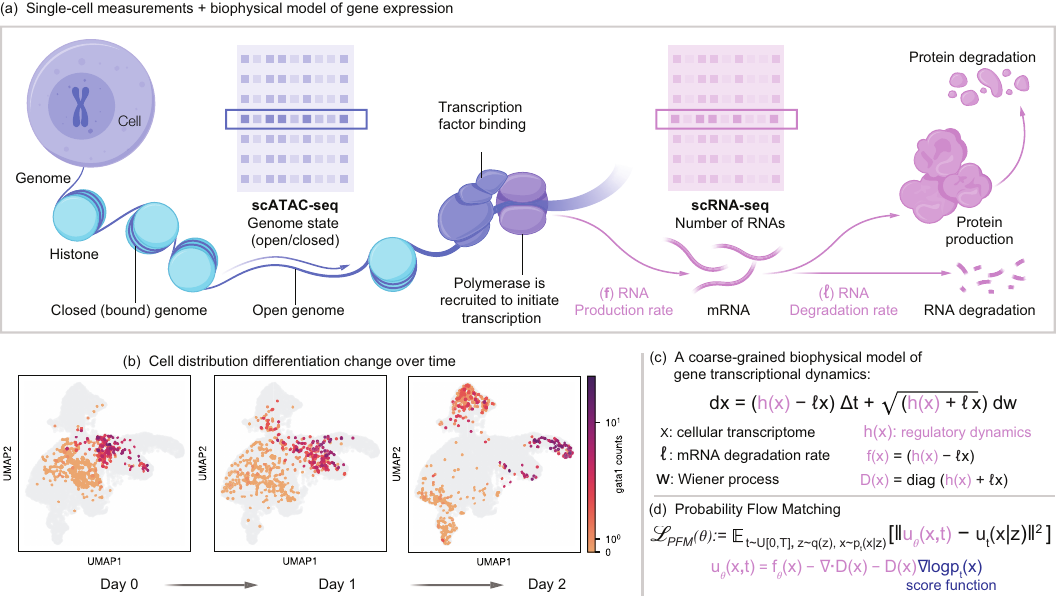}
\caption{
\textbf{Single-cell measurements and probability flow modeling of gene transcriptional dynamics.}
(a) Schematic showing chromatin accessibility (measured by scATAC-seq) to transcription initiation through transcription complex binding, RNA production (measured by scRNA-seq) and degradation, and downstream protein dynamics through synthesis and degradation.
(b) Example of how cell-type distributions evolve in the 2D UMAP coordinates during \textit{in vivo} differentiation of Hematopoietic Stem Cells (HSC) into terminal cell types \cite{weinreb2020lineage}.
(c,d) Probability Flow Matching (PFM) framework: a coarse-grained stochastic model of gene transcription is fit to observed marginal distributions from single-cell omics using the Probability Flow Matching formulation.
}
  \label{fig:schematic}
\end{figure*}

Most computational frameworks for single-cell analysis learn low-dimensional latent representations from high-dimensional data \cite{lopez2018deep, becht2019dimensionality, weinreb2020fundamental}, which enable downstream tasks such as cell type annotation, denoising, pseudotime {ordering}, and lineage inference \cite{trapnell2014dynamics, street2018slingshot, moon2019visualizing} and detection of rare or transitional states \cite{setty2016wishbone}. While these methods can effectively capture statistical structure and organize complex multi-modal datasets, they remain largely descriptive \cite{saelens2019comparison, lahnemann2020eleven}, and offer limited insight into the molecular mechanisms governing state maintenance and transitions \cite{procopio2023combined}. 
Mechanistic frameworks based on dynamical models such as systems of differential equations or probabilistic dynamical formulations \cite{la2018rna, weber2018identification, herbach2017inferring} can provide principled means to uncover these processes, but face significant challenges: single-cell data are high-dimensional, noisy, and typically cross-sectional \cite{saelens2019comparison, schiebinger2019optimaltransport}, making it difficult to reconstruct temporal dynamics like differentiation \cite{weinreb2020fundamental, qiu2017reversed, setty2016wishbone, maddu2025learning}, especially when accounting for population-level shifts due to proliferation and death.




Nonetheless, a wide range of computational methods have been proposed to model time-resolved snapshot data, particularly in the context of single-cell measurements. Among these, methods based on optimal transport have emerged as a powerful framework for constructing coupling between cell state distributions between time points \cite{yang2018scalable, schiebinger2019optimaltransport, schiebinger2021reconstructing, zhang2021optimal, bunne2023learning}. Frameworks such as dynamical optimal transport (OT) build continuous-time flows that interpolate between multiple snapshots, offering a way to reconstruct cellular trajectories across time \cite{tong2020trajectorynet, chizat2022trajectory, lavenant2023mathematical, bunne2023schrodinger}. Similar approaches based on differential equations models have been successful in reconstructing gene regulatory dynamics from high-dimensional single-cell snapshot data \cite{ocone2015reconstructing, matsumoto2017scode, sanchez2018bayesian, hashimoto2016learning, yeo2021generative}. Despite their wide adoption and interpolative accuracy, most of these methods begin by projecting the high-dimensional single-cell data onto a low-dimensional latent space (e.g., via PCA), and then inferring couplings between distributions in this reduced space \cite{tong2020trajectorynet, yeo2021generative}. While such projections offer computational scalability and simplify the modeling of noise, they often lack mechanistic interpretability. As such, they do not lend themselves to performing gene-level \textit{in silico} perturbations \cite{maddu2025learning}, and lack the means to incorporate biophysical constraints. In particular, models constrained to a manifold may generalize poorly to out-of-distribution initial conditions, limiting their applicability in settings involving experimental perturbations or unseen conditions.\\


Simulation-free training strategies based on Flow Matching (FM) \cite{lipman2022flow, lipman2024flow} offer a scalable alternative that enables inference directly in the gene space. However, existing formulations typically assume simplified noise structure and are limited to modeling dynamics between two marginal distributions \cite{tong2024simulation}. Extensions to multiple marginals \cite{rohbeck2025modeling} exist, but ignore stochasticity altogether while still working in reduced spaces. In our prior work on Probability Flow Inference (PFI) \cite{chardes2023stochastic, maddu2025learning, zhang2025inferring}, we proposed a framework for inferring gene-regulatory dynamics directly in gene space while explicitly accounting for intrinsic stochasticity and growth. We demonstrated that modeling intrinsic noise is essential for accurately recovering gene-regulatory interactions. However, PFI has issues with computational scalability, arising from the necessity to solve and optimize ODEs and compute the Wasserstein distance between empirical marginals. As a result, PFI is currently restricted to small gene sets (on the order of a few tens of carefully selected genes) and moderate sample sizes (up to a few thousand cells). An alternative class of methods such as RNA velocity  \cite{zheng2023pumping, ancheta2024challenges, la2018rna, bergen2020generalizing}, overcomes this computational bottleneck but makes a simplifying assumption that gene dynamics are decoupled, ignoring known nonlinear and coordinated regulatory interactions \cite{pratapa2020benchmarking}. This highlights the apparent trade-offs between computational scalability and biophysical consistency, which in turn lead to persistent challenges in interpretability, generalizability, and robustness \cite{richter2026generative}.

We avoid these compromises through Probability Flow Matching (PFM), a scalable and simulation-free framework for inferring mechanistic models of gene expression dynamics from multi-marginal single-cell omics data. PFM builds on recent advances in score-based generative modeling \cite{song2020score, song2021score} and flow matching methods \cite{chen2018neural, lipman2022flow, song2021score}, enabling direct regression of gene regulatory dynamics without expensive high-dimensional ODE solves, all while accommodating stochasticity of arbitrary nature. To ensure smoothness and numerical stability, we parameterize conditional Gaussian paths using spectrally regularized Chebyshev interpolants. This approach captures global temporal deformation across marginals and yields improved accuracy compared to standard linear \cite{tong2024simulation} or cubic spline-based interpolants \cite{rohbeck2025modeling}. Additionally, we extend the PFM framework to incorporate cell proliferation and death, allowing for simultaneous inference of cell-state-specific growth dynamics.\\



We benchmark our framework against simulation-based methods such as Probability Flow Inference on canonical test cases, demonstrating improved accuracy along with an order-of-magnitude reduction in computational and memory requirements. We then apply PFM to the well-studied dynamics of hematopoietic differentiation, exploring models defined by different force and diffusion parameterizations on both simulated and experimental single-cell datasets. Models that incorporate intrinsic molecular stochasticity not only interpolate the observed data with higher fidelity but also recover biologically meaningful gene regulatory interactions, consistent with our previous findings \cite{maddu2025learning}. Importantly, these biophysically grounded models generalize across experimental settings \cite{georgolopoulos2021discrete} and accurately predict cellular trajectories from yet unseen initial conditions, reflecting their robustness and predictive power. Finally, we extend our framework to infer cell-state–specific growth dynamics in an \textit{in vitro} hematopoietic dataset \cite{weinreb2020lineage}, predicting significant population changes over time while capturing the mechanisms of cell fate specification.

\begin{figure*}
  \centering
  \includegraphics[width = 16.cm]{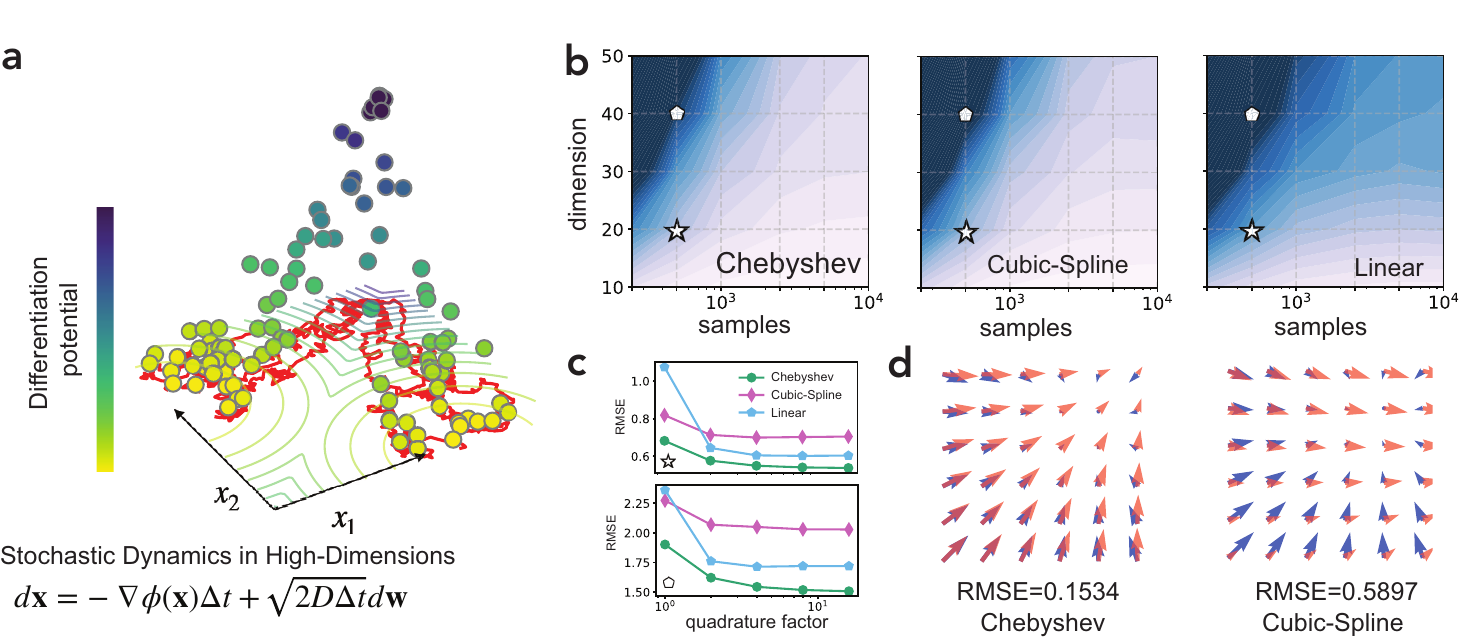}
\caption{
    \textbf{Evaluating the impact of choice of conditional mean paths $\bm{\mu}_t$.}
    \textbf{(a)} Illustration of a high-dimensional differentiation landscape with example stochastic trajectories (red) evolving according to the SDE $d\mathbf{x} = -\nabla \phi(\mathbf{x})\Delta t + \sqrt{2D\Delta t}d \mathbf{w}$. Points are colored by differentiation potential $\phi(\mathbf{x})$. (b) Accuracy of Chebyshev, cubic-spline, and linear interpolation as a function of state dimension and number of samples. Lighter regions indicate lower RMSE. The white stars mark the operating regime used in panel (c). Line plots (right) show RMSE as a function of quadrature resolution for each method.
(d) Reconstructed vector fields using Chebyshev (top) and cubic-spline (bottom) interpolation in a 2D slice of the state space. Chebyshev interpolation captures both magnitude and direction of the drift more accurately (RMSE = 0.1534) compared to cubic splines (RMSE = 0.5897).}
    \label{fig:Bifur-interpolation}
\end{figure*}

\section{Results}
\subsection{Probability Flow Matching: Overview and our formulation}


A broad class of inverse problems in computational biology involves recovering latent stochastic dynamics from cross-sectional snapshot data collected over few discrete time points. Examples include reconstructing transcriptional state dynamics from scRNA-seq data \cite{tong2020trajectorynet, maddu2025learning}, learning chromatin polymer models from FISH measurements \cite{shi2021hi}, and identifying PDE-based descriptions from spatial omics data \cite{raspopovic2014digit, qiu2024spatiotemporal}.\\



Cross-sectional snapshot data correspond to observations of the time-evolving density \(p(\mathbf{x}, t)\). In the context of gene regulatory networks, the state \(\mathbf{x}\) represents the transcriptional state of the cell, and the underlying stochastic dynamics are naturally described by the Fokker–Planck (FP) equation:
\begin{align}
\frac{\partial p_t(\mathbf{x})}{\partial t} &= -\nabla \cdot \left( \mathbf{u}_t(\mathbf{x}) p_t(\mathbf{x}) \right) + g_t(\mathbf{x})p_t(\mathbf{x}), \quad \textrm{where} \nonumber \\
    \mathbf{u}_t(\mathbf{x}) &= \mathbf{f(x)} - \nabla \! \cdot \! \mathbf{D(x)} - \mathbf{D(x)}\nabla \log p_t(\mathbf{x}). \label{eq:FP}
\end{align}
Here $\mathbf{f(x)}$ encodes gene-regulatory interactions, while $\mathbf{D(x)} = \mathbf{G(x)}\mathbf{G(x)}^\top \in \mathbb{S}_{+}^d$ is the diffusion matrix that captures intrinsic/extrinsic noise. Here, $g_t(\mathbf{x})$ models cell proliferation and death \cite{zhang2025inferring}. The velocity field $\mathbf{u}_t(\mathbf{x})$ is the right-hand side of the Probability Flow ODE \cite{song2020score}, which is found by recasting the FP equation into Lagrangian coordinates. Here, \( \nabla \log p_t(\mathbf{x}) \) denotes the score function \cite{hyvarinen2005estimation}. The FP formulation provides a unifying framework for transcriptomic dynamics, within which many existing models arise as special cases. Deterministic approaches such as TrajectoryNet \cite{tong2020trajectorynet} and TIGON \cite{sha2024reconstructing} correspond to \( \mathbf{D}(\mathbf{x}) = 0 \) with time-dependent drift \( \mathbf{f}(\mathbf{x}, t) \), while biophysically motivated models (e.g., PRESCIENT \cite{yeo2021generative}) impose structure such as conservative forces \( \mathbf{f}(\mathbf{x}) = -\nabla \phi(\mathbf{x}) \) with isotropic diffusion.\\





In practice, measurement data is usually given as $K$ statistically independent cross-sectional snapshots, each composed of {$n_k$} samples, taken from the true process at successive times $t_0  < ... < t_k < ... < t_{K-1}$, giving access to an empirical estimator of probability density $p_{t_k}(\mathbf{x})$ at time $t_k$,
$
{p}_{t_k} (\mathbf{x}) \approx  1/n_k \sum_{i = 1}^{n_k} \delta(\mathbf{x} - \mathbf{x}_{i,t_k}). \label{density}
$
The objective is then to infer the latent stochastic process that interpolates these observed marginals. Probability Flow Matching (PFM) does this within a simulation-free, continuous-time framework that learns stochastic processes of arbitrary complexity directly from multi-marginal data, while naturally incorporating growth dynamics. It builds on the Flow Matching (FM) framework \cite{lipman2022flow}, which learns velocity fields that transport a simple base distribution (e.g., a standard normal $p_0$) to a complex target distribution $p_1$ \cite{lipman2022flow}, bypassing repeated ODE solves, explicit likelihood evaluations, and Wasserstein computations, while retaining the advantages of continuous-time formulations. \\

To account for changes in cell abundance through proliferation and death within our PFM framework, we model cell population dynamics by introducing an auxiliary conditional mass density $m_t$, whose evolution governs growth and death rates in phase space (Appendix~\ref{App:unbalanced_TE}). Then, the empirical estimator of the probability density in the unbalanced setting at time $t_k$ can be written as,
\begin{equation}\label{eq:ub_prob}
    p_{t_k}(\mathbf{x}) \approx \sum_{i=1}^{n_k} m_{t_k}(\mathbf{x}_{i,t_k}) \delta (\mathbf{x} - \mathbf{x}_{i,t_k}).
\end{equation}
Based on this, we derive (see Appendix~\ref{sec:transport_equation}) the conditional flow matching objective in the unbalanced setting as:
\begin{equation}\label{eq:CFMg}
\mathcal{L}_{\text{CFM}} \equiv  \mathbb{E}_{t,q(\mathbf{z})} \Big[ \frac{m_t(\mathbf{z})}{M_t}\mathbb{E}_{p_t( \mathbf{x} \vert \mathbf{z})} \left\| \mathbf{u}_t(\mathbf{x} \vert \mathbf{z}) - \mathbf{u}_t^{\theta}(\mathbf{x}) \right\|^2 \Big ].
\end{equation}
where $\mathbf{z}$ is the conditioning variable, $q(\mathbf{z})$ is some prescribed distribution over the conditioning variable $\mathbf{z}$, $p_t(\mathbf{x}\vert \mathbf{z})$ the associated conditional path, $\mathbf{u}_t(\mathbf{x}\vert \mathbf{z})$ the corresponding conditional velocity, and $\mathbf{u}_t^\theta(\mathbf{x})$ the regressed velocity field. The normalization $M_t = \int p_t(\mathbf{x})d\mathbf{x}$ gives the total cell abundance at time $t$, and equals $1$ in the balanced setting. To generate a smooth and continuous velocity field that interpolates across the multi-marginal distributions, we set $q(\mathbf{z})$ to the multi-marginal optimal transport coupling $\pi^*$ (\textcolor{blue}{Material and Methods}) and use Chebyshev interpolants to construct the Gaussian conditional path distributions $p_t(\mathbf{x}\vert \mathbf{z})$. \\

\noindent \paragraph*{Score estimation:} As in PFI \cite{maddu2025learning}, we first estimate the time-dependent score function $\nabla \log p_t(\mathbf{x})$ from empirical samples at each measured time point. We use de-noising score matching techniques that provide fast and efficient, high-dimensional score estimation \cite{song2020sliced, song2020score}. Although one could jointly train the score and force fields \cite{tong2024simulation}, this leads to potential identifiability issues, as errors in the score can be absorbed by the drift and diffusion terms. We therefore find it more principled to estimate the score model independently and using it within the CFM regression objective.\\


In summary, our approach proceeds in three steps: (i) estimate the time-dependent score function from snapshot data, (ii) construct smooth conditional paths across marginals, and (iii) regress the probability flow velocity field as described in Eq.~(\ref{eq:CFMg}); see Algorithm.~1 for the full procedure.

\subsection{PFM enables accurate and scalable inference of stochastic processes from multi-marginal data}

Extending Flow Matching to the multi-marginal setting necessitates a careful parameterization of the conditional path $p_t(\mathbf{x}\vert\mathbf{z})$, because this parameterization determines the conditional velocity guiding the regression. In standard Flow Matching, the conditional path distribution is chosen as a Brownian bridge with diffusion scale $\sigma$, interpolating between samples
 $\mathbf{x}_0$ from the base distribution and $\mathbf{x}_1$ from the target distribution \cite{lipman2022flow}.
 

A natural extension to the multi-marginal setting is to allow the conditional mean path $\bm{\mu}_t$ to evolve piecewise linearly in time, following the same principle as standard Flow Matching between two distributions. This, however, yields piecewise constant velocities with discontinuities at the data points, introducing artifacts in the regressed velocity field. Such discontinuities hinder the inference of stochastic processes, where the underlying characteristic flow velocity is typically smooth and continuous in time. In PFM, we address this  issue by parameterizing the mean path $\bm{\mu}_t$ of the Brownian bridges using interpolating Chebyshev polynomials $Q(\mathbf{x}_0,\dots,\mathbf{x}_{K-1})$, i.e. $p_t(\mathbf{x} \vert \mathbf{z}) = \mathcal{N}(\mathbf{x} \vert  Q_t(\mathbf{z}), \sigma^2)$ which yield globally smooth time derivatives, in contrast to linear interpolants. In contrast to local polynomial interpolants such as cubic-spline, Chebyshev interpolation exhibits faster convergence in practice, particularly advantageous in the multi-marginal setting.

To illustrate this, we use a 2D Ornstein–Uhlenbeck (OU) process, for which force and diffusion operate on comparable time-scales, to infer a linear drift matrix $\mathbf{B}$ from time-resolved marginal data. The multi-marginal data is generated at time points $(t_1, t_2, t_3, t_4)$ from simulating the OU process with additive noise with constant strength (Fig.~\ref{fig:ou-interpolation}a). We then apply PFM to this dataset to recover $\mathbf{B}$, using different choices of conditional paths based on different parameterizations of $\bm{\mu}_t$: Linear, Cubic-Spline, and Chebyshev. The reconstructed probability flow velocities $\hat{\mathbf{u}}_t(\mathbf{x})$ (red arrows) are overlaid over the analytically evaluated probability flow velocity fields (yellow arrows, Fig.~\ref{fig:ou-interpolation}b) for different interpolation schemes. Among the methods, Chebyshev interpolation yields the most accurate estimation of the drift matrix, achieving nearly an order-of-magnitude reduction in RMSE compared to linear and cubic-spline interpolants. This underscores the importance of the conditional path parameterization in accurately interpolating between marginals and recovering the underlying dynamics.\\

To evaluate the accuracy and convergence of PFM and to quantify the effect of different interpolants in high-dimensional nonlinear settings, we consider a multistable system in $\mathbb{R}^d$ governed by a Waddington-like potential landscape (Fig.~\ref{fig:Bifur-interpolation}a) with additive noise of constant strength. We generate synthetic data in dimensions $d \in \{10,20,30,40,50\}$ and generate multi-marginal snapshot data at $K=5$ time points. PFM is then applied to reconstruct the underlying force field using three parameterizations of the mean path $\bm{\mu}_t$. As shown in Fig.~\ref{fig:Bifur-interpolation}b, the RMSE varies with both dimensionality $d$ and sample size $n$, with Chebyshev interpolation exhibiting superior accuracy in the high-dimensional, low-sample regime. It also shows improved convergence compared to the linear and cubic interpolants when increasing the number of quadrature points used to approximate the time-dependent expectation $\mathbb{E}_t[\cdot]$ in Eq.~(\ref{eq:CFMg}) (Fig.~\ref{fig:Bifur-interpolation}c). In summary, PFM with conditional paths parameterized by Chebyshev polynomial interpolants enables robust, accurate, and scalable inference of stochastic processes from multi-marginal datasets. In the next section, we apply PFM to time-resolved scRNA-seq datasets to infer cell differentiation dynamics.





\begin{figure*}
  \centering
  \includegraphics[width = 17.5cm]{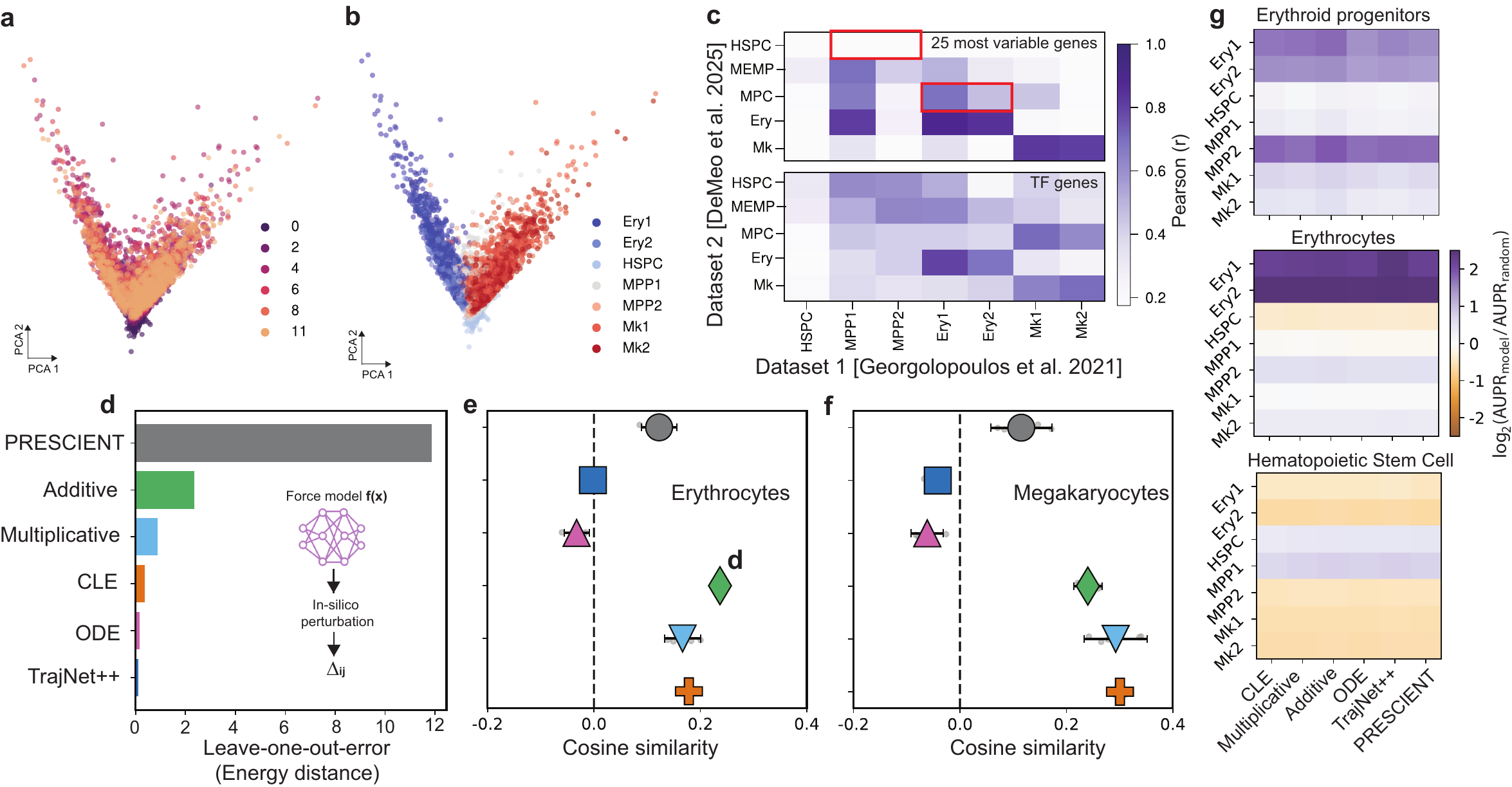}
\caption{
    \textbf{Inferring Gene Regulatory Interactions Using PFM:} Principal component projections of mRNA counts for 24 key transcription factors (TFs) measured during ex vivo hematopoiesis, colored by day \textbf{(a)} and cell type \textbf{(b)}.  \textbf{(c)} Cross-dataset comparison of gene–gene correlations during CD34+ HSC differentiation. The top heatmap shows correlations among the 25 most variable genes, and the bottom heatmap shows correlations among similar number of TF genes. Red boxes highlight regions with incorrectly inferred correlations that arise only in the most variable gene set. Expanded forms of the cell-type abbreviations are provided in Table~\ref{tab:cell_types} in the Appendix. \textbf{(d)} Interpolation accuracy, quantified by leave-one-out energy distance, for different dynamical models fitted to the marginal data using PFM. (Inset) In-silico perturbation of the inferred force field $\mathbf{f(x)}$ to estimate the regulatory response matrix $\Delta_{ij}$. \textbf{(e,f)} Cosine similarity between the inferred signed gene-regulatory interactions, indicating activation or inhibition, and experimentally validated networks for erythroid and megakaryocytic lineages, respectively. \textbf{(g)} Area under the precision–recall curve (AUPR) comparing cell-type-specific (titles) inferred gene-regulatory networks against corresponding ChIP-seq datasets across different methods. 
    }
    \label{fig:HSC_GRN}
\end{figure*}

\subsection{PFM enables accurate inference of differentiation dynamics in the Hematopoietic system}

Hematopoiesis is a canonical example of cell differentiation where mature blood cells of the myeloid and lymphoid lineage emerge from Hematopoietic stem cells (HSCs). Increasing evidence suggests that Hematopoietic differentiation is best understood as a continuous dynamical stochastic process, characterized by transient phases of indeterminate or mixed-lineage states \cite{dussiau2022hematopoietic}. Extensive single-cell transcriptomic (scRNA-seq) and chromatin accessibility (scATAC-seq) datasets now capture the transcriptional and regulatory dynamics of Hematopoietic differentiation. These measurements jointly profile thousands of genes and regulatory loci, defining a high-dimensional landscape where stem and progenitor cells commit to distinct lineages. 

Given such time-resolved multi-marginal data, the objective is to infer an underlying stochastic dynamics of differentiation. However, this is an inherently ill-posed inverse problem, with identifiability being a long-standing challenge in single-cell omics analysis \cite{weber2018identification}. This is reflected in the fact that many combinations of force and diffusion can interpolate the data equally well. For example, diffusion-less non-autonomous force models and autonomous Chemical Langevin models have been shown to have comparable interpolative accuracy \cite{maddu2025learning, tong2020trajectorynet}. The PFM provides a flexible framework to infer dynamical models of varying force and diffusion form that reconstructs the differentiation dynamics in high-dimensional TF space.

\begin{figure*}
  \centering
  \includegraphics[width = 17.5cm]{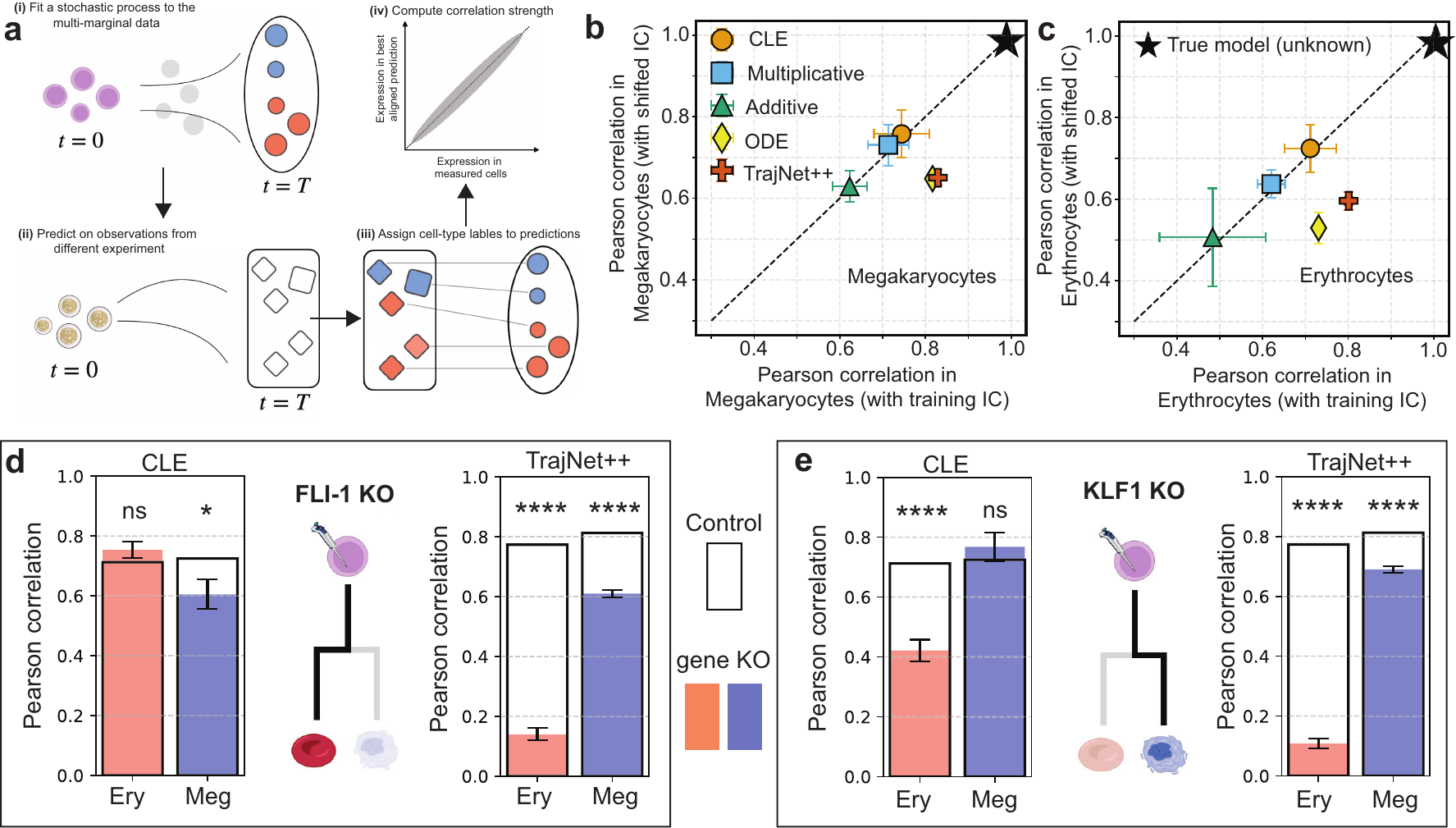}
\caption{
    \textbf{Evaluating Generalization Across Initial Conditions:}
     (a) Schematic of the validation strategy: models trained on differentiation trajectories originating from stem or progenitor populations are tested by predicting terminal gene-expression states from unseen initial conditions. Predicted and experimentally measured terminal cell distributions are first aligned using optimal transport (OT), and model performance is quantified by the correlation between aligned expression profiles. (b,c) Comparison of prediction accuracy across models for (b) megakaryocytic and (c) erythroid differentiation, showing the correlation between training and testing performance when initialized from HSC/HSPC states. The star symbol denotes perfect generalization. (d) Each panel compares the Pearson correlation of knockout (KO) populations with control Erythrocytes and Megakaryocytes, shown here in black outline. FLI1-KO denotes knockout of the FLI1 gene, which disrupts megakaryocyte formation while preserving erythroid identity. KLF1-KO denotes knockout of the KLF1 gene, which impairs erythroid differentiation but leaves the megakaryocytic lineage largely intact.
    }
    \label{fig:generalization}
\end{figure*}

We demonstrate this flexibility by applying PFM to time-resolved mRNA count data collected during \textit{ex vivo} Hematopoiesis \cite{georgolopoulos2021discrete}. The dataset captures cells undergoing a 12-day \textit{ex vivo} differentiation of CD34$++$
hematopoietic stem and progenitor cells (HSPCs) into erythroid and megakaryocytic lineages. Single-cell RNA-seq measurements were obtained on days 2, 4, 6, 8, and 11 following induction. Despite the high-dimensionality of the data, differentiation is largely governed by a small set of transcription factors (TFs) whose combinatorial activity drives lineage-specific programs \cite{reiter2017combinatorial,gillespie2020absolute}. We model differentiation within this reduced TF space, which preserves the essential regulatory architecture while remaining computationally tractable and interpretable. As shown in Fig.~\ref{fig:HSC_GRN}a,b, the PCA projection of mRNA counts for a set of 24 transcription factor (TF) genes accurately captures both the temporal progression of differentiation and the branching corresponding to erythropoiesis and megakaryopoiesis. These genes encode a minimal set of biophysically active regulators that modulate transcriptional activity and control differentiation, and their mRNA count distributions form a cell-type-specific signature that is conserved across datasets. This is illustrated in Fig.~\ref{fig:HSC_GRN}c, where we correlate these distributions across cell types between two different datasets of CD34+ HSC differentiation \cite{demeo2025active, georgolopoulos2021discrete}. We find high Pearson correlations for matching cell types, from progenitor to differentiated stages. By contrast, the distributions of non-TF genes, such as the 24 most highly variable genes of the \emph{ex vivo} hematopoiesis dataset \cite{georgolopoulos2021discrete}, while well conserved for differentiated cell types, do not distinguish progenitor cells as effectively.
\\

We fit the resulting multi-marginal distributions using PFM with different formulations of the force and diffusion terms.  As shown in Fig.~\ref{fig:HSC_GRN}d, the diffusion-free TrajectoryNet model trained with PFM, which employs a non-autonomous force field, achieves the smallest interpolation error (leave-one-out-error), followed by the ODE and Langevin models. In contrast, the PRESCIENT model, which enforces a strictly conservative force, overconstrains the dynamics and fails to capture the branching structure and temporal progression of differentiation, resulting in substantially higher interpolation error. We observe a similar trend in other Hematopoietic datasets with more than two terminal states (Fig.~\ref{fig:interpolation_error}). Together, these results highlight that a sufficiently expressive flow model can interpolate the data accurately, even when its underlying dynamics are not biophysically consistent.\\

\subsection{Biophysically consistent models lead to accurate gene-regulatory network inference}

Next, we ask whether high interpolation accuracy is sufficient to recover meaningful gene-regulatory interactions. To test this, we examine the causal relationship between genes via \textit{in silico} perturbations of the inferred force field $\mathbf{f}$ described in \cite{maddu2025learning}. Since the force model is explicitly a function of the state $\mathbf{x}$, this enables estimation of the local regulatory matrix (the Jacobian/response matrix) for each cell, thereby capturing the plasticity of gene-regulatory programs that continuously evolve across conditions and cell states. This is a fundamental feature of biological regulation that static network models overlook. \\


We illustrate this point by comparing the inferred networks for each cell type against corresponding chromatin immunoprecipitation sequencing (ChIP-seq) data. ChIP-seq reports TF–target binding interactions and therefore captures the structure of the underlying unsigned directed regulatory graph. Across all models, the AUPR computed against ChIP-seq data is highest for the matching cell type. For example, the network inferred for erythrocytes aligns most closely with the regulatory graph estimated from ChIP seq data of erythrocytes (Fig.~\ref{fig:HSC_GRN}g). These results highlight that regulatory interactions vary across cell types, reflecting context-specific gene regulation. At the same time, comparable AUPR values across models indicate that each captures the dominant directed dependencies among genes in a cell-type-specific manner (Fig.~\ref{fig:HSC_GRN}g).\\

Finally, we assessed whether the models can be distinguished by their ability to recover signed directed interactions, which encode not only direction but also the nature of regulation (activation or inhibition). For this, we design a similarity metric that accounts for both directionality and the sign of edges \cite{maddu2025learning}. Consistent with our previous findings \cite{maddu2025learning}, biophysically consistent models that account for intrinsic stochasticity, such as the Chemical Langevin model, enable a clearer separation of regulatory signals from noise. This accurately captures the causal interactions among key transcription factors driving erythrocyte–megakaryocyte fate bifurcation (Fig.~\ref{fig:HSC_GRN}e,f). This is substantiated by the fact that TF mRNA levels are often low relative to other genes \cite{weinreb2020lineage,gillespie2020absolute} (Fig.~\ref{fig:tfs_rest}), which potentially amplifies intrinsic stochastic effects and reshapes the underlying epigenetic landscape \cite{coomer2022noise}. In contrast, deterministic models such as ODE and TrajectoryNet, although accurate at interpolating data, fail to recover the correct regulatory dynamics because they attempt to model a noise-distorted landscape deterministically.








\subsection{Biophysically consistent models generalize well to unseen conditions}

In the previous section, we showed that accurate data interpolation does not guarantee correct recovery of gene-regulatory interactions. Two models with distinct force fields $\mathbf{f}_1(\mathbf{x},t)$ and $\mathbf{f}_2(\mathbf{x},t)$ can fit the data equally well and even share the same fixed points. For instance, both the Langevin and ODE models reproduce the terminal cell states, yet only the biophysically consistent Chemical Langevin model captures meaningful regulatory structure, indicating that their underlying force fields differ. In many biological systems, however, a ground-truth regulatory network is not available for direct validation. In such cases, one way to assess whether a model has learned the correct regulatory interactions is to evaluate its ability to generalize to unseen initial conditions.\\



We test this hypothesis by fitting models with specified force and diffusion forms to marginal data and assessing their generalization under shifted initial conditions. Specifically, we train on an \textit{ex vivo} differentiation dataset where samples at early time $t_0$ are enriched in HSPCs that differentiate into two terminal states: erythrocytes and megakaryocytes. A model that has correctly inferred the underlying regulatory interactions among key transcription factors should recover the same fixed points (terminal states) when initialized with HSCs, whose progeny give rise to the HSPCs. To test this, we align the model’s predicted terminal cells with labeled terminal cells from the training data by matching each cell in the predicted distribution to its most similar experimental counterpart (see Materials and Methods). Cell-type specific Pearson correlations between the paired cell's gene expression profiles quantify how well the model reproduces the correct terminal identities (See Fig.~\ref{fig:generalization}).\\

On the training data, methods that accurately interpolate the multi-marginal distributions, such as TrajectoryNet and ODE, naturally achieve high correlations, as reproducing the observed fixed points is intrinsic to their interpolation objective (Fig.~\ref{fig:HSC_GRN}d). However, when evaluated on data with shifted initial conditions (\textit{in vitro} CITE-seq data \cite{demeo2025active}), these correlations drop sharply (Fig.~\ref{fig:generalization}b,c), indicating a failure to recover the same attractors. Among all tested models, the CLE best recovers the terminal expression patterns corresponding to the erythroid and megakaryocytic fates, even when evaluated on unseen initial conditions not present during training. This trend persists when the training and test datasets are swapped (Fig.~\ref{fig:kaggle_generalization}). Overall, we found that models that accounted for some form of stochasticity retained similar correlations in both the training and test datasets.\\

We next assess whether the learned models could reproduce lineage-specific outcomes under \textit{in-silico} perturbations that mimic transcription-factor knockouts (KOs). To simulate these perturbations, we set the expression of selected transcription factors to zero and quantified how each model’s predicted terminal distribution aligned with the control. Specifically, we simulated knockouts of KLF1 (Kr\"{u}ppel-like factor 1), a key Erythroid transcription factor, and FLI1 (Friend leukemia integration 1), a transcription factor essential for Megakaryocytic differentiation. Experimentally, KLF1-KO impairs erythroid maturation while sparing megakaryocytes, whereas FLI1-KO blocks megakaryocytic differentiation but leaves the erythroid lineage intact.  Deterministic models such as TrajectoryNet+ and ODE failed to reproduce these lineage-specific effects, predicting loss of both lineages under FLI1-KO (Fig.~\ref{fig:KO}). In contrast, stochastic formulations, particularly the biophysically constrained Langevin and CLE models, correctly reproduce the asymmetric responses: KLF1-KO selectively disrupts erythroid differentiation while preserving megakaryocytes $(p=0.11)$, and FLI1-KO impairs megakaryocytic differentiation but leaves the erythroid lineage intact $(p=0.28)$. Models incorporating stochasticity and biophysical constraints thus generalize more faithfully to unseen perturbations and better capture the mechanistic basis of cell-fate specification.

\subsection{PFM enables inferring stochastic process along with growth dynamics}

\begin{figure*}
  \centering
  \includegraphics[width = 17.cm]{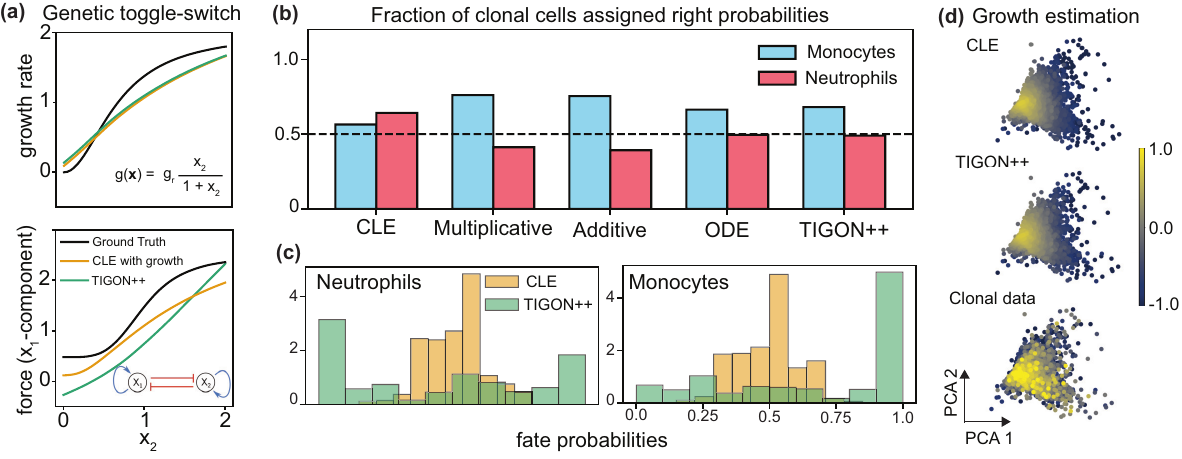}
\caption{
    \textbf{Inferring both growth and gene-regulatory dynamics using unbalanced PFM:} 
\textbf{(a)} Growth–rate and force–field estimation in a two-gene genetic toggle-switch model, comparing ground-truth dynamics with CLE-based inference and TIGON++.
\textbf{(b)} Fraction of clonal cells correctly assigned to their true fate across different inference models on the \textit{in vitro} Hematopoietic dataset. \textbf{(c)} Distributions of inferred fate probabilities for neutrophils and monocytes, highlighting differences in fate-decision behavior between CLE (stochastic) and TIGON++ (deterministic) formulations.
\textbf{(d)} Growth-rate estimations reconstructed from CLE, TIGON++, and clonal-lineage information.
    }
    \label{fig:growth}
\end{figure*}

During differentiation, cells not only change state but also undergo continuous proliferation and death. Existing approaches, such as UPFI \cite{zhang2025inferring} and TIGON \cite{sha2024reconstructing}, can infer both drift and population-level birth–death dynamics, but both are simulation-based, which requires repeated simulation and optimization of ODEs in high dimensions. Within our simulation-free PFM framework, we can jointly infer both the gene-regulatory dynamics $(\mathbf{h(x))}$ and the growth dynamics $(g(\mathbf{x}))$ through an additional objective that penalizes deviations from the observed population imbalances:
\begin{equation}\label{eq:CFM_growth}
    \hat{\mathbf{h}}, \hat{g} = \arg \min_{\mathbf{f},{g}} \mathcal{L}_{\text{CFM}} + \lambda_g \sum_k \vert M_{t_k} - \widehat{M}_{t_k}\vert
\end{equation}
where the estimated total mass is $\widehat{M}_{t_k} =  \sum_{i=1}^{n_k} m(\mathbf{x}_{i,t_k})$ and the local mass $m_t$ evolves along the probability path $\bm{\mu}_t$ according to $\dot{m}(\bm{\mu}_t) = m(\bm{\mu}_t) g(\bm{\mu}_t)$, where  $g(\bm{\mu}_t)$ denotes the growth rate along the trajectory. The regularization strength $\lambda_g$ is adaptively tuned following the weighting strategy described in \cite{maddu2022inverse}. In {Fig.~\ref{fig:growth}a}, we demonstrate how the PFM framework can jointly infer both gene-regulatory and growth dynamics using a genetic toggle switch as an illustrative example. Notably, the accuracy of the inferred growth dynamics appears to show limited sensitivity to the specific parameterization of drift and diffusion underlying the stochastic process.\\

We then applied PFM to time-resolved mRNA count data from the hematopoietic system, where progenitor cells are progressively differentiated into Neutrophils and Monocytes. The dataset captures cells undergoing a $6$-day \textit{in vitro} differentiation while also characterized by pronounced cell proliferation and death dynamics \cite{weinreb2020lineage}. Additionally, the dataset also provides lineage relationships between cells across multiple time points, which serve as first-order ground truth for evaluating the accuracy of the inferred growth dynamics $g$. To account for lineage-specific growth dynamics, we parameterize our growth dynamics as a function of the transcriptome state, i.e., $g(\mathbf{x})$, and parameterize using a feed-forward neural network. We then fit the corresponding multi-marginal distributions by minimizing the objective in Eq.~(\ref{eq:CFM_growth}), incorporating both force and diffusion terms as described in the previous section, along with the growth dynamics. \\

Consistent with our earlier findings, purely data-driven models such as TIGON++ and ODE achieve the lowest interpolation error (Fig.~\ref{fig:interpolation_error}a). However, when evaluated on their ability to predict cell fates, the underlying choice of stochastic dynamics has a pronounced effect on the inferred trajectories, and consequently on fate predictions (Fig.~\ref{fig:growth}b). Notably, only the CLE model correctly predicts the fate of more than $50\%$ (random fate assignment) of clonally tracked cells in both the neutrophil and monocyte populations. Not surprisingly, the distributions of inferred fate probabilities reveal distinct mechanisms of fate specification: TIGON++ produces more deterministic assignments, whereas CLE yields probabilistic, with slight bias to make fate decisions (Fig.~\ref{fig:growth}c). Finally, the growth-rate estimates $g(\mathbf{x})$ on day$\sim 4$ (Fig.~\ref{fig:growth}d) show no substantial differences across models, similar to our observations in the two-gene toggle-switch example.


\subsection{Discussion}
Here we introduce PFM, a scalable framework for inferring biophysically-consistent stochastic dynamics directly from multi-marginal cross-sectional single-cell data. PFM leverages recent algorithmic advances in generative modeling, like score- and flow matching, to fit a Fokker-Planck equation to distributional datasets. We also introduce methodological advances that extend FM to the multi-marginal setting: (i) a closed-form conditional flow-matching objective for inferring stochastic processes with an arbitrary form of stochasticity; (ii) a Chebyshev-polynomial based temporal interpolation scheme that parametrizes conditional probability flow paths and yields smooth conditional velocity estimates, outperforming linear and cubic spline interpolants in both accuracy and convergence; and (iii) an extension of flow matching to unbalanced settings to captures cell proliferation and death dynamics. \\

Another key aspect of our approach is the choice of representation of transcriptional state of the cell. Rather than operating in generic low-dimensional embeddings that can obscure mechanistic structure, we model dynamics in a transcription factor (TF) gene space, which provides biophysically meaningful coordinates for gene regulation. This reduced space preserves the core gene-regulatory architecture while enabling mechanistic interpretability and computational tractability. Our results show that a small subset of TF mRNA distributions form conserved, cell-type-specific signatures across datasets, indicating that this representation suffices to capture differentiation dynamics. More broadly, this suggests that modeling in a TF gene space can provide a tractable and mechanistically interpretable representation for inferring gene-regulatory dynamics from high-dimensional single-cell data.\\


The PFM framework also provides an overarching setting to examine different generative models such as TrajectoryNet, PRESCIENT, and Schrödinger Bridge, allowing training both deterministic and stochastic flows within a single framework. In particular, this flexible framework allows inferring biophysical stochastic models that explicitly capture intrinsic noise, such as the Chemical Langevin Equation (CLE). Applying the PFM framework to the hematopoietic system, we demonstrated that CLE models inferred from multi-marginal scRNA-seq data accurately reconstructed observed distributions and recovered gene-regulatory interactions consistent with ChIP-seq evidence. They also capture signed, directed edges, encoding both the direction and nature of regulation (activation or inhibition), in agreement with known transcription-factor relationships. In contrast, deterministic continuous normalizing flows, whether autonomous (ODE-based) or non-autonomous (TrajectoryNet+), interpolated the data well but failed to recover accurate regulatory interactions.\\

We also evaluated how well the different models generalized to unseen conditions using data from an independent experiment. Stochastic formulations, particularly the CLE model, generalized robustly beyond their training regime, while their deterministic counterparts did not. When initialized from an unseen population of hematopoietic progenitors, the CLE model also accurately reconstructed terminal attractors corresponding to erythroid and megakaryocytic fates, maintaining high gene-expression correlations with experimental data. The same model also reproduced lineage-specific responses under \textit{in-silico} transcription-factor knockouts: KLF1-KO selectively disrupted erythroid differentiation, whereas FLI1-KO perturbed megakaryocytic fate. Deterministic models failed to generalize to unseen conditions and did not reproduce the lineage-specific effects of transcription-factor knockouts. While stochasticity generally helps explore the overall epigenetic landscape, we found that the molecular stochasticity generalizes better. This choice of stochasticity naturally reproduces the observation that the noise scale is dependent on the local geometry of the epigenetic landscape \cite{losick2008stochasticity, pujadas2012regulated}.\\

In an extension to an unbalanced setting, we applied PFM to hematopoietic data with pronounced population imbalances across time, inferring proliferation and death dynamics consistent with clonal lineage estimates. Beyond cell differentiation dynamics, PFM offers a broad general framework for inferring stochastic dynamics that drive diverse cellular processes. Its scalability, flexibility, and ease of implementation makes PFM a powerful approach for integrating first-principles biophysical modeling with time-resolved, high-dimensional single-cell multi-omics data, enabling quantitative modeling and control of stochastic dynamics underlying differentiation, development, and disease progression. 




\subsection{Limitations and future directions}

In this study, we focused on the well-characterized hematopoietic differentiation system and used prior knowledge of the key transcription factors to reduce the model’s dimensionality. However, identifying a minimal and relevant gene set for less-studied systems, such as neural or T-cell differentiation, remains an open challenge. During embryonic development, gene-regulatory programs are tightly controlled in space and time. This suggests that future extensions of PFM should integrate spatial context information from spatial omics with time-resolved single-cell data to capture how local microenvironments influence regulatory decisions. Another important direction is to incorporate clonal lineage information from longitudinal transcriptomic data \cite{weinreb2020lineage} or high-resolution trajectories of marker genes to constrain further and validate the inferred dynamics. 



Currently, PFM employs a simple feedforward neural network to parameterize the force field and score function. Future work could leverage more structured architectures, such as graph neural networks or masked sparse layers, to encode prior biological information, such as  
known regulatory interactions or excluded non-interacting gene pairs.

\section{Materials and Methods}


\subsection*{Sampling conditioning variable via Multi-Marginal Optimal Transport}
The choice of the conditional variable $\mathbf{z}$ is crucial for determining the conditional velocity $\mathbf{v}_t(\mathbf{x}\vert\mathbf{z})$, and consequently the inferred force field. We rely on the extension of the OT-CFM approach to sample the conditional variable from the multi-marginal OT coupling \( \pi^*  \). For a choice of the cost function that is pairwise additive, the MMOT problem reduces to a set of $K-1$ independent OT problems \cite{rohbeck2025modeling}. Thus, when the pairwise optimal transport plans \( \pi_k^* \) between consecutive time points are known, we approximate the full MMOT coupling by
\begin{equation}
    \pi^*(i_0, \dots, i_{K-1}) \propto \frac{\prod_{k=0}^{K-2} \pi_k^*(i_k, i_{k+1})}{\prod_{k=1}^{K-2} \mu_k(i_k)}.
\end{equation}
This not only enforces temporal marginal consistency but also provides a principled way to sample the most probable trajectory across time, represented by the tuple $\mathbf{z} = (\mathbf{x}_0, \mathbf{x}_1,\cdots,\mathbf{x}_{K-1})$.

\subsection*{Chemical Langevin description}
The mathematical framework to account for intrinsic molecular stochasticity is the Chemical Fokker-Planck Equation (CFPE) and its associated Chemical Langevin
Equation (CLE). In the PFM framework, this corresponds to the force and diffusion taking the form:
\begin{align}
    \mathbf{f(x)} &=  \mathbf{h}(\mathbf{x}) - \ell \mathbf{x}, \label{eq:force}\\
    \mathbf{D(x)} &= \frac{1}{2}\text{diag}( h_1(\mathbf{x})+\ell x_1,\cdots,h_d(\mathbf{x})+\ell x_d). \label{eq:diff}
\end{align}
Here, $h_i(\mathbf{x})$, \( 0 < h_i(\mathbf{x}) < 1 \) is the activation function, and \( \ell \) is the degradation rate of mRNA molecules. 

\subsection*{Chebyshev Interpolation with Regularization}

To reconstruct smooth trajectories from discrete observations, we employed Chebyshev polynomial interpolation to parameterize our conditional paths. Given time points \( t \in [a, b] \), we map them to the canonical interval \( s \in [-1, 1] \) via the affine transformation
\[
s = \frac{2t - (a + b)}{b - a}.
\]
 The interpolant is expressed as a truncated Chebyshev expansion of degree \( M \):
\[
Q(t) \approx \sum_{m=0}^{M} c_m T_m(s),
\]
where \( T_m(s) \) are Chebyshev polynomials of the first kind defined recursively by
\[
T_0(s) = 1, \quad T_1(s) = s, \quad T_{m+1}(s) = 2s T_m(s) - T_{m-1}(s).
\]

The coefficients \( \{ c_m \} \) are obtained by solving the regularized least-squares problem
\[
\min_{\mathbf{c}} \| \mathbf{V} \mathbf{c} - \mathbf{x} \|^2 + \lambda c^T \mathbf{R} c 
\]
where \( V \in \mathbb{R}^{K \times (M+1)} \) is the Chebyshev--Vandermonde matrix evaluated at rescaled time points, and the diagonal matrix \( \mathbf{R} \) specifies the type of regularization. For instance, an \( \ell_2 \) penalty sets all diagonal entries to one, while velocity- and curvature-based regularizations assign the \( m \)-th diagonal element as \( m^2 \) and \( m^4 \), respectively. The time derivatives of the interpolant are computed analytically using Chebyshev polynomials of the second kind \( U_m(s) \), with
\[
\frac{d}{dt} Q(t) = \frac{2}{b - a} \sum_{m=1}^{M} c_m \cdot m \cdot U_{m-1}(s).
\]

\subsection*{Data processing}
We use three publicly available single-cell RNA-seq datasets: two datasets of induced differentiation of human CD34$^+$ HSPCs \cite{georgolopoulos2021discrete}, one generated with 10X Chromium v1 \cite{georgolopoulos2021discrete} and one with CITE-seq \cite{demeo2025active}, and one dataset of induced mouse HSPC differentiation with LARRY barcoding for lineage tracing \cite{weinreb2020lineage}. To compute the UMAP embeddings shown in Fig.~\ref{fig:schematic}, we applied the following steps: (i) subsampling to 20000 cells; (ii) selecting 2000 highly variable genes using scanpy with the \texttt{flavor=seurat\_v3} method \cite{wolf2018scanpy}, (iii) library-size correction, (iv) variance stabilization via biwhitening \cite{chardes2025random} with mean centering, and (v) PCA with 40 components followed by UMAP on these components. To compute the PCA projection in Fig.~\ref{fig:HSC_GRN} we only discarded cells expressing less than $4$ TFs among the $24$ selected TFs. For both datasets, after subsampling, we discarded cells expressing fewer than 100 genes and genes expressed in fewer than $0.1\%$ of cells.

\subsection*{Verifying that TFs have low number of counts}
For each dataset, we checked whether transcription factor genes have statistically lower counts than random sets of non-TF genes. For this, we selected genes with total counts between $100$ and $3\times 10^6$. We computed the distribution of non-zero counts for transcription factors listed in \cite{ravasi2010atlas} (shown in red in Fig.~\ref{fig:HSC_GRN}) and compared it to the average non-zero count distribution for $100$ random sets of genes sampled with replacement among non-TF genes (shown in gray in Fig.~\ref{fig:HSC_GRN}). To assess statistical significance, we computed a one-sided p-value for each bin following the standard approach for permutation p-values \cite{phipsonpermutation}. For a given bin, this p-value simply counts how many of the $100$ random gene sets had more counts than the set of TF genes. This p-value is shown in inset in Fig.~\ref{fig:HSC_GRN} and demonstrates that TF genes are enriched in very low counts (close to one) and depleted in larger counts. In appendix we show similar results for the two other datasets \cite{demeo2025active, weinreb2020lineage}. The idea that TF genes have low counts may not hold for \emph{all} TF genes. For instance, in the human datasets \cite{georgolopoulos2021discrete, demeo2025active}, we noticed that some TF genes encoding proteins with ubiquitous roles in cell proliferation, transcription, and metabolism, such as BTF3, ENO1, NPM1, PTMA, YBX1, and other High Mobility Group protein-coding genes, have consistently large numbers of counts. Discarding these outlier TFs, along with any other gene expressed at comparable or higher levels, improves the statistical significance of the results.

\subsection*{Comparing count distributions between experiments}
\label{counts_pearson}
To assess the validity of transferring the model trained on the 10X Chromium V3 dataset \cite{georgolopoulos2021discrete} to the CITE-seq dataset \cite{demeo2025active}, we evaluate how the count distributions compare between pairs of cell types in each dataset. Let us denote the two datasets $A$ and $B$. Given a pair of cell types $(a \in A, b \in B)$, we compute the optimal transport plan, with respect to the squared Euclidean cost, matching the gene expression profiles $X^a$ of $2000$ cells of type $a$ to the gene expression profiles $Y^b$ of the same number of cells of type $b$. We then use this optimal transport plan to determine which cell in $b$ is most similar to each cell in $a$. Denoting $(\Pi_{ij})_{i \in a, j \in b}$ the transport plan, this is done by matching each cell $i \in a$ to the cell $k \in b$ defined by $k = \arg\max_j \Pi_{ij}$. We then compute the correlation between the gene expression profiles of each paired cell. More specifically, this is achieved by correlating simultaneously $X_{iq}$ with $Y_{kq}$ and $X_{ip}$ with $Y_{kp}$ for all genes $p \neq q$ and all cells $i$. This is the same assignment approach used to compute the Pearson correlation per cell type in Fig.~\ref{fig:generalization}.

\subsection*{Data and code availability}
The source code to reproduce the findings of this study are openly available at 
\hyperlink{PFI}{https://vchz.github.io/pfi/}, alongside the processed scRNA-seq datasets \hyperlink{PFI}{https://zenodo.org/records/19237708}

\subsection*{Acknowledgements}
We are grateful to graphic designer Lucy Reading-Ikkanda for her help with the figures in this study. We also thank Scott Weady, Gilles Francfort, Yuhai Tu, and Daniel Needleman for many insightful discussions.

\bibliographystyle{unsrt}
\bibliography{references}

\appendix
\renewcommand{\thefigure}{S\arabic{figure}}
\setcounter{figure}{0}

\onecolumngrid
\newpage
\section{Transport equation through conditional distributions}\label{sec:transport_equation}
In this section, we derive the continuity equation governing the time–evolving density \(p_t(\mathbf{x})\)induced by time-evolving conditional distribution constructed from multi-marginal distributions $q_0,q_1,\dots,q_{K-1}$ sampled at discrete times $t_k, 0\leq k\leq K-1$. We assume no cell proliferation or death, i.e., $g_t(\mathbf{x}) = 0$. We begin with a brief overview of the Gaussian conditional paths of \cite{lipman2022flow}, and then analyze their behavior in the limit of constant but vanishing variance.

\subsection{Gaussian flow with constant and vanishing variance}
For a fixed conditioning variable $\mathbf{z}$ defined with the tuple $(\mathbf{x}_0,\mathbf{x}_1,\dots,\mathbf{x}_{K-1})$, we recall the conditional Gaussian path considered by Lipman et al.~\cite{lipman2022flow} (Theorem~3) as
\[
p_t(\mathbf{x} \mid \mathbf{z})
= \mathcal{N}\!\big(\mathbf{x};\,\bm{\mu}_t(\mathbf{z}),\,\sigma_t^2(\mathbf{z})\big),
\qquad t\in[0,1],
\]
where \(\bm{\mu}_t(\mathbf{z})\) and \(\sigma_t(\mathbf{z})>0\) are differentiable in \(t\). The mean conditional paths $\bm{\mu}_t$ are constructed in a way to satisfy the boundary conditions. For instance, when $t= 0$, the mean conditional path $\bm{\mu}_{t_0} = \mathbf{x}_0$ and for $\sigma_t \rightarrow 0$, $p_{t_0}(\mathbf{x}\vert \mathbf{x}_0)$ is a concentrated Gaussian distribution centered at $\mathbf{x}_0$. Assume that Gaussian probability path \(p_t(\cdot\mid \mathbf{z})\) is realized as the pushforward of \(p_0(\cdot\mid \mathbf{z})\) by a flow map \(\psi_t\) generated by a
(time-dependent) vector field \(\bm{u}_t(\cdot\mid \mathbf{z})\). Then
Theorem~3 of~\cite{lipman2022flow} states that the vector field that defines $\psi_t$ has the form:
\begin{equation}
\bm{u}_t(\mathbf{x}\mid \mathbf{z})
    = \frac{\sigma_t'(\mathbf{z})}{\sigma_t(\mathbf{z})}\bigl(\mathbf{x} - \bm{\mu}_t(\mathbf{z})\bigr)
      + \bm{\mu}_t'(\mathbf{z}).
\label{eq:gaussian-velocity-general}
\end{equation}
\noindent We now record two simple consequences of the above relation in Eq.~(\ref{eq:gaussian-velocity-general}) that are useful for our analysis.

\begin{corollary}[Constant-variance Gaussian path and small-variance limit]
\label{cor:const-var-small-sigma}
Suppose that for each fixed \(\mathbf{z}\) the conditional variance is constant in time,
\[
    \sigma_t(\mathbf{z}) \equiv \sigma(\mathbf{z}) > 0, \qquad t \in [t_0,t_{K-1}].
\]
For each choice of \(\sigma(\mathbf{z})\) we obtain a Gaussian path
\[
    p_t^{(\sigma)}(\mathbf{x} \mid \mathbf{z})
    = \mathcal{N}\!\bigl(\mathbf{x};\,\bm{\mu}_t(\mathbf{z}),\sigma^2(\mathbf{z})\bigr), \qquad t\in[t_0,t_{K-1}].
\]

\begin{itemize}
 \item [(a)]For any fixed \(t\in[t_0,t_{K-1}]\) and \(\mathbf{z}\), as the (time-independent) variance
      parameter \(\sigma(\mathbf{z})\) tends to zero, the path collapses to a moving
      point mass, in the sense that
      \[
          p_t^{(\sigma)}(\cdot \mid \mathbf{z})
          \;\Rightarrow\;
          \delta\bigl(\mathbf{x} - \bm{\mu}_t(\mathbf{z})\bigr)
          \qquad \text{as } \sigma(\mathbf{z}) \rightarrow 0,
      \]
      where \(\Rightarrow\) denotes weak convergence of probability measures.\\

 \item [(b)]For every fixed \(\sigma(\mathbf{z}) > 0\), the associated Gaussian path
      \(p_t^{(\sigma)}(\mathbf{x} \mid \mathbf{z})\) is generated by the conditional velocity that only depends on the mean conditional path $\bm{\mu}_t$
      \[
          \bm{u}_t^{(\sigma)}(\mathbf{x} \mid \mathbf{z})
          = \dot{\bm{\mu}}_t(\mathbf{z}), \qquad \mathbf{x} \in \mathbb{R}^d,
      \]
      obtained from Eq.~(\ref{eq:gaussian-velocity-general}) by observing that
      \(\sigma_t'(\mathbf{z}) = 0\) when \(\sigma_t(\mathbf{z})\equiv \sigma(\mathbf{z})\) is fixed.
      In particular, the velocity field does not depend on \(\sigma(\mathbf{z})\),
      so the limiting point mass \(\delta(\mathbf{x}-\bm{\mu}_t(\mathbf{z}))\) is transported along
      the trajectory \(\bm{\mu}_t(\mathbf{z})\) with velocity \(\dot{\bm{\mu}}_t(\mathbf{z})\).
\end{itemize}
\end{corollary}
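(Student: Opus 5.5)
Both parts are elementary and I would treat them separately, taking the Gaussian velocity formula Eq.~(\ref{eq:gaussian-velocity-general}) (Theorem~3 of \cite{lipman2022flow}) as given.

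For part (a), I would fix $t$ and $\mathbf{z}$ and write $p_t^{(\sigma)}(\cdot\mid\mathbf{z})$ as the law of $\bm{\mu}_t(\mathbf{z}) + \sigma(\mathbf{z})\,\bm{\epsilon}$ with $\bm{\epsilon}\sim\mathcal{N}(0,\mathbf{I})$. For any bounded continuous test function $\varphi$, we then have
$\int \varphi\, dp_t^{(\sigma)} = \mathbb{E}\,\varphi(\bm{\mu}_t(\mathbf{z})+\sigma\bm{\epsilon})$.
Since $\varphi(\bm{\mu}_t+\sigma\bm{\epsilon})\to\varphi(\bm{\mu}_t)$ pointwise in $\bm{\epsilon}$ as $\sigma\to0$, and the integrand is bounded by $\|\varphi\|_\infty$, dominated convergence gives convergence to $\varphi(\bm{\mu}_t(\mathbf{z}))$. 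This is exactly weak convergence to $\delta(\mathbf{x}-\bm{\mu}_t(\mathbf{z}))$. As an alternative route, the characteristic functions $\exp(i\mathbf{k}\cdot\bm{\mu}_t - \sigma^2|\mathbf{k}|^2/2)\to \exp(i\mathbf{k}\cdot\bm{\mu}_t)$ pointwise, and Lévy's continuity theorem would give the same conclusion.

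For part (b), I would substitute $\sigma_t(\mathbf{z})\equiv\sigma(\mathbf{z})$, so that $\sigma_t'=0$, into Eq.~(\ref{eq:gaussian-velocity-general}). The first term vanishes and we get $\bm{u}_t^{(\sigma)}(\mathbf{x}\mid\mathbf{z})=\dot{\bm{\mu}}_t(\mathbf{z})$, which is independent of both $\mathbf{x}$ and $\sigma$. To confirm this directly rather than only relying on the cited theorem, I would check the continuity equation. Since $p_t^{(\sigma)}(\mathbf{x}\mid\mathbf{z})=\phi_\sigma(\mathbf{x}-\bm{\mu}_t)$, we have $\partial_t p = -\dot{\bm{\mu}}_t\cdot\nabla\phi_\sigma(\mathbf{x}-\bm{\mu}_t) = -\nabla\cdot(\dot{\bm{\mu}}_t\, p)$, because $\dot{\bm{\mu}}_t$ is spatially constant. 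Equivalently, the flow map is the translation $\psi_t(\mathbf{x})=\mathbf{x}+\bm{\mu}_t-\bm{\mu}_{t_0}$, which pushes $\mathcal{N}(\bm{\mu}_{t_0},\sigma^2)$ forward to $\mathcal{N}(\bm{\mu}_t,\sigma^2)$.

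For the final claim about the point mass, I would note that this translation flow is independent of $\sigma$. Pushing $\delta_{\bm{\mu}_{t_0}}$ forward by $\psi_t$ gives $\delta_{\bm{\mu}_t}$, and a distributional check of the continuity equation for $\delta(\mathbf{x}-\bm{\mu}_t)$ with velocity $\dot{\bm{\mu}}_t$ confirms this. The latter amounts to $\frac{d}{dt}\varphi(\bm{\mu}_t)=\nabla\varphi(\bm{\mu}_t)\cdot\dot{\bm{\mu}}_t$.

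The only mild subtlety I expect is regularity. I would assume $\bm{\mu}_t$ is differentiable in $t$, which holds for the Chebyshev interpolants, and I would be careful that the $\sigma\to0$ limit is taken at each fixed $t$ with the velocity held fixed. There is no genuine obstacle beyond that.
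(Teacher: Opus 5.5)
Your proposal is correct and follows the paper's route: for (b) the paper does exactly what you do, setting $\sigma_t'(\mathbf{z})=0$ in Eq.~(\ref{eq:gaussian-velocity-general}) so that only $\dot{\bm{\mu}}_t(\mathbf{z})$ survives. The paper asserts (a) without argument. Your dominated-convergence (or L\'evy continuity) proof of (a), and your direct check via the translation flow $\psi_t(\mathbf{x})=\mathbf{x}+\bm{\mu}_t-\bm{\mu}_{t_0}$ and the weak continuity equation for $\delta(\mathbf{x}-\bm{\mu}_t)$, simply make explicit what the paper treats as standard.
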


\subsection{Transport equation in the balanced setting}
\noindent Using the conditional distributions in the constant and vanishing variance limit derived in Corollary \ref{cor:const-var-small-sigma}(a,b), we define the corresponding marginal distribution in $\mathbf{x}$ as
\begin{equation}
p_t(\mathbf{x})  := \int p_t(\mathbf{x}\mid\mathbf{z}) q(\mathbf{z})d\mathbf{z} 
 = \int \delta (\mathbf{x} - \bm{\mu}_t(\mathbf{z})) q(\mathbf{z})d\mathbf{z}\label{eq:pt-def}.
\end{equation} 
\paragraph{Boundary conditions.} This form also satisfies the appropriate boundary conditions given by the data at all times $t_k, \forall k \in [0,\dots,K-1]$. Since the mean conditional path evaluates to the samples, i.e., \(\bm{\mu}_{t_k} = \mathbf{x}_k\), following Eq.~(\ref{eq:pt-def}), the marginal distribution naturally satisfies the boundary condition
\[
    p_{k}(\mathbf{x})=q_k(\mathbf{x}), \quad \forall \: k \in [0,\dots,K-1].
\]

\paragraph{Time derivative of marginal distribution \(p_t(\mathbf{x})\):}
\noindent Differentiate Eq.~\eqref{eq:pt-def} w.r.t time and using the chain rule $\partial_t \delta(\mathbf{x}-\bm\mu_t)
        = - \nabla_\mathbf{x} \delta(\mathbf{x}-\bm\mu_t) \cdot \dot{\bm\mu}_t$, we get
\begin{align}
    \partial_t p_t(\mathbf{x})
        &= \int \partial_t \delta(\mathbf{x}-\bm{\mu}_t)\,
            q(\mathbf{z})\, d\mathbf{z}.
            \label{eq:pt-deriv1} \\
  & = -\nabla_\mathbf{x} \cdot 
         \int 
            \delta(\mathbf{x}-\bm\mu_t)\,\dot{\bm \mu}_t(\mathbf{z})\,
            q(\mathbf{z})\, d\mathbf{z} .
    \label{eq:pt-deriv2}
\end{align}

\paragraph{Transport velocity $\mathbf{u}_t(\mathbf{x})$: }
\noindent We can also define the transport velocity, by marginalizing over the conditional probability flow $p_t(\mathbf{x}\vert \mathbf{z})$ induced by the conditional velocity $\mathbf{u}_t(\mathbf{x}\vert \mathbf{z})$ defined as follows:
\begin{equation}
    \mathbf{u}_t(\mathbf{x})
    :=  \int \mathbf{u}_t(\mathbf{x}\mid\mathbf{z})  \, \frac{p_t(\mathbf{x}\mid \mathbf{z}) q(\mathbf{z})}{p_t(\mathbf{x})}\, d\mathbf{z} .
    \label{eq:cond-vel}
\end{equation}
%
\noindent Multiplying both side by $p_t(\mathbf{x})$, we get
\begin{align}
   \mathbf{u}_t(\mathbf{x}) p_t(\mathbf{x}) & =  \int \mathbf{u}_t(\mathbf{x}\mid \mathbf{z}) \, p_t(\mathbf{x}\mid \mathbf{z})q (\mathbf{z})\ d\mathbf{z}\\
    \mathbf{u}_t(\mathbf{x}) p_t(\mathbf{x})& =  \int \dot{\bm\mu}_t(\mathbf{z}) \, \delta (\mathbf{x} - \bm{\mu}_t(\mathbf{z}))q (\mathbf{z})\ d\mathbf{z}  \quad \quad \text{using corollary \ref{cor:const-var-small-sigma}} \label{eq:vp-identity}
\end{align}

\begin{proposition}[Transport equation]
Let \(p_t\) be defined by \eqref{eq:pt-def}.  
Then \(p_t\) satisfies the continuity equation
\begin{equation}
    \partial_t p_t(\mathbf{x})
        = - \nabla_\mathbf{x} \cdot \big( \mathbf{u}_t(\mathbf{x})\, p_t(\mathbf{x}) \big),
    \label{eq:transport}
\end{equation}
where the velocity field \(\mathbf{u}_t(\mathbf{x})\) is the marginal velocity given by Eq.~\ref{eq:cond-vel}.

\end{proposition}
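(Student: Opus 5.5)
The plan is to compare the two expressions already derived just before the statement: the time derivative \eqref{eq:pt-deriv2} and the flux identity \eqref{eq:vp-identity}. I will work in the weak (distributional) sense throughout, since $p_t$ is a superposition of Dirac masses. Concretely, for a test function $\varphi\in C_c^\infty(\mathbb{R}^d)$, definition \eqref{eq:pt-def} gives
\[
\langle p_t,\varphi\rangle=\int \varphi(\bm\mu_t(\mathbf{z}))\,q(\mathbf{z})\,d\mathbf{z}.
\]
Assuming $\bm\mu_t(\mathbf{z})$ is $C^1$ in $t$ (true for the Chebyshev interpolant, a polynomial in $t$ with coefficients linear in $\mathbf{z}$), I differentiate under the integral. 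This is justified by dominated convergence: $\nabla\varphi$ is bounded and $\dot{\bm\mu}_t(\mathbf{z})$ is integrable against $q$ once $q$ has a finite first moment. The result is
\[
\frac{d}{dt}\langle p_t,\varphi\rangle=\int \nabla\varphi(\bm\mu_t(\mathbf{z}))\cdot\dot{\bm\mu}_t(\mathbf{z})\,q(\mathbf{z})\,d\mathbf{z},
\]
which is exactly the weak form of \eqref{eq:pt-deriv1}--\eqref{eq:pt-deriv2}.

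Next I will identify the vector-valued measure $J_t(\mathbf{x}) := \int \dot{\bm\mu}_t(\mathbf{z})\,\delta(\mathbf{x}-\bm\mu_t(\mathbf{z}))\,q(\mathbf{z})\,d\mathbf{z}$ with $\mathbf{u}_t p_t$. This follows from Eq.~\eqref{eq:vp-identity}, which rests on Corollary~\ref{cor:const-var-small-sigma}(b): the conditional velocity equals $\dot{\bm\mu}_t(\mathbf{z})$ independently of $\sigma$. With that identification, the right-hand side above is $\langle J_t,\nabla\varphi\rangle=\langle \mathbf{u}_t p_t,\nabla\varphi\rangle$. This is the weak statement of \eqref{eq:transport}: $\partial_t p_t = -\nabla\cdot(\mathbf{u}_t p_t)$.

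The main obstacle is making sense of the marginal velocity \eqref{eq:cond-vel} in the degenerate limit, where $p_t$ may not have a density. I will handle it in two steps.
\begin{itemize}
\item First, prove the identity at fixed $\sigma>0$. There $p_t^{(\sigma)}$ is a smooth positive density and \eqref{eq:cond-vel} is a genuine conditional expectation. Since each Gaussian conditional path satisfies its own continuity equation with velocity $\dot{\bm\mu}_t(\mathbf{z})$ (Corollary~\ref{cor:const-var-small-sigma}(b)), integrating against $q$ and exchanging $\nabla\cdot$ with $\int d\mathbf{z}$ yields the statement directly.
\item Second, pass to $\sigma\to0$ using the weak convergence of Corollary~\ref{cor:const-var-small-sigma}(a). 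The flux $J_t^{(\sigma)}=\int\dot{\bm\mu}_t\,\mathcal{N}(\cdot;\bm\mu_t,\sigma^2)\,q\,d\mathbf{z}$ converges weakly to $J_t$, and in the limit $\mathbf{u}_t$ is interpreted as the Radon--Nikodym derivative $dJ_t/dp_t$. This derivative exists because $J_t\ll p_t$: wherever $p_t$ gives zero mass, no $\bm\mu_t(\mathbf{z})$ lands there.
\end{itemize}
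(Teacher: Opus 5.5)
Your proposal is correct and follows the paper's argument: the paper's proof simply substitutes the flux identity \eqref{eq:vp-identity} into the time derivative \eqref{eq:pt-deriv2}, which is exactly your comparison of $\frac{d}{dt}\langle p_t,\varphi\rangle$ with $\langle J_t,\nabla\varphi\rangle$. Your weak formulation and your reading of $\mathbf{u}_t$ as the Radon--Nikodym derivative $dJ_t/dp_t$ make rigorous what the paper does formally with Dirac deltas; the fixed-$\sigma$-then-limit step is harmless but not strictly needed, since your direct weak computation already settles the $\sigma=0$ case.
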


\begin{proof}
Substituting the identity \eqref{eq:vp-identity} into \eqref{eq:pt-deriv2} gives \eqref{eq:transport}.
\end{proof}

\subsection{Transport equation in the unbalanced setting}\label{App:unbalanced_TE}
In this section, we derive the evolution equation for a time-dependent unbalanced density $\tilde{p}_t(\mathbf{x})$ induced by conditional measures constructed from multi-marginal distributions. The key difference from previous section is that marginal distribution is no longer normalized.\\

\noindent We first introduce a (strictly positive) mass field
$m_t>0$, which we evaluate along the conditional mean path via $m_t(\mathbf{z})$.  We then define the conditional path as \emph{mass-weighted kernel}
\begin{equation}
\tilde{p}_t(\mathbf{x}\mid\mathbf{z})
    := m_t\big(\mathbf{z}\big)\,
       \delta\big(\mathbf{x}-\bm{\mu}_t(\mathbf{z})\big),
\label{eq:mass-kernel}
\end{equation}
which plays the role of a ``conditional mass'' assigned to the position
$\mathbf{x}$ conditioned on $\mathbf{z}$. This is analagous to the conditional probability distribution $p_t^{(\sigma\rightarrow 0)}(\mathbf{x}\vert \mathbf{z})$ discussed in Corollary \ref{cor:const-var-small-sigma} in the balanced setting.\\

\noindent The corresponding (unnormalized) mass density is then
defined as
\begin{equation}
\tilde{p}_t(\mathbf{x})
    := \int \tilde{p}_t(\mathbf{x}\mid\mathbf{z})\,q(\mathbf{z})\,d\mathbf{z}
     = \int \delta\big(\mathbf{x}-\bm{\mu}_t(\mathbf{z})\big)\,
            m_t\big(\mathbf{z} \big)\,
            q(\mathbf{z})\,d\mathbf{z}.
\label{eq:pt-unbalanced-def}
\end{equation}
In general $M_t := \int \tilde{p}_t(\mathbf{x})\,d\mathbf{x} \neq 1$, so $\tilde{p}_t$ should be interpreted as a mass density rather than a probability density.
\\

\noindent \paragraph{Boundary conditions.}
At $t=0$ and $t=1$, using $\bm{\mu}_0(\mathbf{z})=\mathbf{x}_0$ and
$\bm{\mu}_1(\mathbf{z})=\mathbf{x}_1$ we obtain
\begin{align}
\tilde{p}_0(\mathbf{x})
    &= \int \delta(\mathbf{x}-\mathbf{x}_0)\,
         m_0(\mathbf{x}_0)\,q(\mathbf{x}_0) q(\mathbf{x}_1)\,
       d\mathbf{x}_0 d\mathbf{x}_1
     = m_0(\mathbf{x})\,q_0(\mathbf{x}),\\
\tilde{p}_1(\mathbf{x})
    &= \int \delta(\mathbf{x}-\mathbf{x}_1)\,
         m_1(\mathbf{x}_1)\,q(\mathbf{x}_0) q(\mathbf{x}_1)\,
       d\mathbf{x}_0 d\mathbf{x}_1 = m_1(\mathbf{x})\,q_1(\mathbf{x}),
\end{align}
so $\tilde{p}_t$ coincides with the observed data marginals multiplied by the
appropriate endpoint mass fields.\\

\paragraph{Time derivative of unbalanced \(\tilde{p}_t(\mathbf{x})\): } Differentiating \eqref{eq:pt-unbalanced-def} with respect to time and
using the chain rule yields
\begin{equation}
\partial_t \tilde{p}_t(\mathbf{x})
    = \int \partial_t\!\left[
            \delta\big(\mathbf{x}-\bm{\mu}_t(\mathbf{z})\big)\,
            m_t\big(\mathbf{z} \big)
        \right] q(\mathbf{z})\,d\mathbf{z}.
\label{eq:pt-deriv-unbalanced-1}
\end{equation}
We expand the time derivative as
\begin{equation}
\partial_t\!\left[
    \delta(\mathbf{x}-\bm{\mu}_t(\mathbf{z}))\,m_t(\mathbf{z})
\right]
= \underbrace{\partial_t \delta(\mathbf{x}-\bm{\mu}_t(\mathbf{z}))\,
               m_t(\mathbf{z})}_{\text{transport}}
  + \underbrace{\delta(\mathbf{x}-\bm{\mu}_t(\mathbf{z}))\,
               \partial_t m_t(\mathbf{z})}_{\text{growth/death}},
\end{equation}
Using the chain rule,
$
\partial_t \delta(\mathbf{x} - \bm{\mu}_t)
=
- \nabla_{\mathbf{x}} \delta(\mathbf{x} - \bm{\mu}_t)
\cdot \dot{\bm{\mu}}_t,
$
and the identity
$
\partial_t m_t(\mathbf{z})
=
m_t(\mathbf{z})\, \partial_t \log m_t(\mathbf{z}),
$
we obtain
\begin{align}
\partial_t \tilde{p}_t(\mathbf{x})
&= -\int
     \nabla_{\mathbf{x}}\delta(\mathbf{x}-\bm{\mu}_t(\mathbf{z}))\cdot
     \dot{\bm{\mu}}_t(\mathbf{z})\,
     m_t(\mathbf{z})\,q(\mathbf{z})\,d\mathbf{z}
   \nonumber\\[0.1cm]
&\quad + \int
     \delta(\mathbf{x}-\bm{\mu}_t(\mathbf{z}))\,
     m_t(\mathbf{z})\,
     \partial_t\log m_t(\mathbf{z})\,
     q(\mathbf{z})\,d\mathbf{z}.
\label{eq:pt-deriv-unbalanced-2}
\end{align}
Using integration by parts in $\mathbf{x}$ for the first term, we can
write this compactly as
\begin{equation}
\partial_t \tilde{p}_t(\mathbf{x})
  = -\nabla_{\mathbf{x}}\cdot
        \int \delta(\mathbf{x}-\bm{\mu}_t)\,
             \dot{\bm{\mu}}_t(\mathbf{z})\,
             m_t(\mathbf{z})\,q(\mathbf{z})\,d\mathbf{z}
    + \int \delta(\mathbf{x}-\bm{\mu}_t)\,
             m_t(\mathbf{z})\,
             \partial_t\log m_t(\mathbf{z})\,
             q(\mathbf{z})\,d\mathbf{z}.
\label{eq:pt-deriv-unbalanced-3}
\end{equation}

\paragraph{Transport velocity in the unbalanced setting: }  We can also define the transport velocity, by marginalizing over the mass-weighted kernel $\tilde{p}_t(\mathbf{x}\vert \mathbf{z})$ induced by the conditional velocity $\mathbf{u}_t(\mathbf{x}\vert \mathbf{z})$ defined as follows:
\begin{equation}
    \mathbf{u}_t(\mathbf{x})
    :=  \int \mathbf{u}_t(\mathbf{x}\mid\mathbf{z})  \, \frac{\tilde{p}_t(\mathbf{x}\mid \mathbf{z}) q(\mathbf{z})}{\tilde{p}_t(\mathbf{x})}\, d\mathbf{z} .
\label{eq:unbalanced-cond-vel}
\end{equation}

\noindent Multiplying both sides of \eqref{eq:unbalanced-cond-vel} by $\tilde{p}_t(\mathbf{x})$, we find
\begin{align}
\mathbf{u}_t(\mathbf{x})\,\tilde{p}_t(\mathbf{x})
 &= \int \mathbf{u}_t (\mathbf{x} \mid \mathbf{z})\tilde{p}_t(\mathbf{x} \mid \mathbf{z})\,q(\mathbf{z})\,d\mathbf{z}\\
\mathbf{u}_t(\mathbf{x})\,\tilde{p}_t(\mathbf{x})
 &= \int \dot{\bm{\mu}}_t(\mathbf{z})\,
        \delta(\mathbf{x}-\bm{\mu}_t(\mathbf{z}))\,
        m_t(\mathbf{z})\,q(\mathbf{z})\,d\mathbf{z}.
\label{eq:vp-identity-unbalanced}
\end{align}

\paragraph{Growth rate:} Similarly to the transport velocity, we define the local growth rate
\begin{equation}
g_t(\mathbf{x})
   := \mathbb{E}_{\tilde{p}_t(\mathbf{z}\mid\mathbf{x})}
        \big[\partial_t\log m_t(\mathbf{z}) \mid \mathbf{x}\big]
    = \int \partial_t\log m_t(\mathbf{z})\,
         \tilde{p}_t(\mathbf{z}\mid\mathbf{x})\,d\mathbf{z},
\label{eq:unbalanced-cond-growth}
\end{equation}
which, after multiplying by $p_t(\mathbf{x})$ and inserting the
definition of $p_t(\mathbf{z}\mid\mathbf{x})$, gives
\begin{align}
g_t(\mathbf{x})\,\tilde{p}_t(\mathbf{x})
 &= \int \partial_t\log m_t(\mathbf{z})\,
        \delta(\mathbf{x}-\bm{\mu}_t(\mathbf{z}))\,
        m_t(\mathbf{z})\,q(\mathbf{z})\,d\mathbf{z}.
\label{eq:gp-identity-unbalanced}
\end{align}

\begin{proposition}[Unbalanced transport equation]
Let $p_t$ be defined by \eqref{eq:pt-unbalanced-def}. Then $\tilde{p}_t$ satisfies
the unbalanced continuity equation
\begin{equation}
\partial_t \tilde{p}_t(\mathbf{x})
  = -\nabla_{\mathbf{x}}\cdot\big(\mathbf{u}_t(\mathbf{x})\,\tilde{p}_t(\mathbf{x})\big)
    + g_t(\mathbf{x})\,\tilde{p}_t(\mathbf{x}),
\label{eq:unbalanced-transport}
\end{equation}
where the velocity field $\mathbf{u}_t$ and growth rate $g_t$ are given by
the conditional expectations
\[
\mathbf{u}_t(\mathbf{x})
   = \mathbb{E}_{\tilde{p}_t(\mathbf{z}\mid\mathbf{x})}
        \big[\dot{\bm{\mu}}_t(\mathbf{z}) \mid \mathbf{x}\big],
\qquad
g_t(\mathbf{x})
   = \mathbb{E}_{\tilde{p}_t(\mathbf{z}\mid\mathbf{x})}
        \big[\partial_t\log m_t(\mathbf{z}) \mid \mathbf{x}\big].
\]
\end{proposition}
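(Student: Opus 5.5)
The plan mirrors the balanced case. It combines the expanded time derivative \eqref{eq:pt-deriv-unbalanced-3} with the two flux identities \eqref{eq:vp-identity-unbalanced} and \eqref{eq:gp-identity-unbalanced}, all read as identities between distributions in $\mathbf{x}$.

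First I fix the posterior. Define $\tilde{p}_t(\mathbf{z}\mid\mathbf{x}) := \tilde{p}_t(\mathbf{x}\mid\mathbf{z})\,q(\mathbf{z})/\tilde{p}_t(\mathbf{x})$ on the support of $\tilde{p}_t$; this is a probability density in $\mathbf{z}$ by \eqref{eq:pt-unbalanced-def}. Then \eqref{eq:unbalanced-cond-vel} becomes $\mathbf{u}_t(\mathbf{x}) = \mathbb{E}_{\tilde{p}_t(\mathbf{z}\mid\mathbf{x})}[\mathbf{u}_t(\mathbf{x}\mid\mathbf{z})]$. In the mass-weighted kernel \eqref{eq:mass-kernel} the point mass is still carried along $\bm{\mu}_t(\mathbf{z})$, so Corollary~\ref{cor:const-var-small-sigma}(b) gives $\mathbf{u}_t(\mathbf{x}\mid\mathbf{z}) = \dot{\bm{\mu}}_t(\mathbf{z})$. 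This yields the stated conditional-expectation form of $\mathbf{u}_t$. The formula for $g_t$ is simply the definition \eqref{eq:unbalanced-cond-growth}.

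Second, I justify \eqref{eq:pt-deriv-unbalanced-3} weakly. Pair $\tilde{p}_t$ with a test function $\varphi\in C_c^\infty(\mathbb{R}^d)$, so that $\langle \tilde{p}_t,\varphi\rangle = \int \varphi(\bm{\mu}_t(\mathbf{z}))\,m_t(\mathbf{z})\,q(\mathbf{z})\,d\mathbf{z}$. Differentiating under the integral gives
\[
\frac{d}{dt}\langle \tilde{p}_t,\varphi\rangle = \int \big[\nabla\varphi(\bm{\mu}_t)\cdot\dot{\bm{\mu}}_t\, m_t + \varphi(\bm{\mu}_t)\,\partial_t m_t\big]\, q\, d\mathbf{z}.
\]
I then rewrite $\partial_t m_t = m_t\,\partial_t\log m_t$, which is valid because $m_t>0$. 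Reading the first term as $\langle -\nabla\cdot(\cdot),\varphi\rangle$ and the second as a pairing with a measure reproduces \eqref{eq:pt-deriv-unbalanced-3}.

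Third, I substitute. By \eqref{eq:vp-identity-unbalanced} the first integral in \eqref{eq:pt-deriv-unbalanced-3} equals $\mathbf{u}_t\tilde{p}_t$, so that term is $-\nabla\cdot(\mathbf{u}_t\tilde{p}_t)$. By \eqref{eq:gp-identity-unbalanced} the second integral equals $g_t\tilde{p}_t$. Together these give \eqref{eq:unbalanced-transport}.

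The main obstacle is regularity. Differentiation under the integral needs domination: I would assume $\bm{\mu}_t(\mathbf{z})$ is $C^1$ in $t$ (true for the Chebyshev interpolants), that $m_t(\mathbf{z})$ is $C^1$ and positive, and that $\dot{\bm{\mu}}_t m_t$ and $\partial_t m_t$ are $q$-integrable. For empirical $q$ these conditions are trivial. A second point is that the products $\mathbf{u}_t\tilde{p}_t$ and $g_t\tilde{p}_t$ must be understood via disintegration of the measure $m_t q$ pushed forward under $\mathbf{z}\mapsto\bm{\mu}_t(\mathbf{z})$, rather than as pointwise quotients. Under that reading the identities \eqref{eq:vp-identity-unbalanced} and \eqref{eq:gp-identity-unbalanced} hold exactly.
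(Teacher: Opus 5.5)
Your proposal is correct and follows the same route as the paper. Like the paper, you differentiate the mass-weighted mixture \eqref{eq:pt-unbalanced-def} to obtain \eqref{eq:pt-deriv-unbalanced-3}, then substitute the flux and source identities \eqref{eq:vp-identity-unbalanced} and \eqref{eq:gp-identity-unbalanced}. The only difference is that your test-function pairing and disintegration reading make rigorous the formal delta-function steps the paper performs directly: the chain rule on $\delta$, the integration by parts, and the division by $\tilde{p}_t(\mathbf{x})$.
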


\begin{proof}
Substituting the identities
\eqref{eq:vp-identity-unbalanced} and \eqref{eq:gp-identity-unbalanced}
into \eqref{eq:pt-deriv-unbalanced-3} yields
\eqref{eq:unbalanced-transport}:
\[
\partial_t \tilde{p}_t(\mathbf{x})
  = -\nabla_{\mathbf{x}}\cdot\big(\mathbf{u}_t(\mathbf{x})\,\tilde{p}_t(\mathbf{x})\big)
    + g_t(\mathbf{x})\,\tilde{p}_t(\mathbf{x}).
\]
\end{proof}

\subsection{Derivation of the Unbalanced Flow Matching Objective}

\noindent Given a mass density path $\tilde{p}_t(\mathbf{x})$ and a corresponding vector field $\mathbf{u}_t(\mathbf{x})$, which generates $p_t(\mathbf{x})$, we define the Unbalanced Flow Matching (UFM) objective as,
\begin{equation}\label{eq:UFM}
    \mathcal L_{\mathrm{UFM}}(\theta)
= \mathbb E_{t\sim \mathcal{U}[t_0,t_{K-1}],\mathbf{x}\sim p_t(\mathbf{x})}\!\big[\|\mathbf{u}_t^\theta(\mathbf{x})-\mathbf{u}_t(\mathbf{x})\|^2\big]
= \mathbb{E}_t \Big [\frac{1}{M_t}\int_{\mathcal X}\tilde p_t(\mathbf{x})\,\|\mathbf{u}_t^\theta(\mathbf{x})-\mathbf{u}_t(\mathbf{x})\|^2\,d\mathbf{x}\Big].
\end{equation}

\noindent 

\begin{theorem}
Assume that the $p_t(\mathbf{x})>0$ for all $\mathbf{x}\in\mathbb R^d$ and $t\in[0,1]$.
Then, up to a constant independent of $\theta$, the unbalanced flow matching loss in Eq.~(\ref{eq:UFM})
is equivalent to the conditional objective
\[
\mathcal L_{\mathrm{UCFM}}(\theta)
=
\mathbb E_{t, q(\mathbf{z})}
\left[
\frac{m_t(\mathbf{z})}{M_t}
\,
\mathbb E_{p_t(\mathbf{x}\mid\mathbf{z})}
\|\mathbf{u}_t^\theta(\mathbf{x})-\dot{\bm{\mu}}_t(\mathbf{z})\|^2
\right].
\]
\noindent Hence, 
\[
\nabla_\theta \mathcal L_{\mathrm{UFM}}(\theta)
\equiv
\nabla_\theta \mathcal L_{\mathrm{UCFM}}(\theta).
\]
\end{theorem}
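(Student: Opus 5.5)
The plan is to follow the standard Lipman et al.\ argument for the equivalence of flow matching and conditional flow matching, carrying the mass weights $m_t(\mathbf{z})/M_t$ along. I will work at a fixed time $t$; the outer $\mathbb{E}_t$ is linear and independent of $\theta$, so it passes through unchanged. Throughout, $p_t = \tilde p_t/M_t$ denotes the normalized density, and the conditional kernel in $\mathcal L_{\mathrm{UCFM}}$ is read as $p_t(\mathbf{x}\mid\mathbf{z}) = \delta(\mathbf{x}-\bm\mu_t(\mathbf{z}))$. Equivalently, one may take the constant-variance Gaussian with $\sigma>0$ and pass to the limit at the end, as in Corollary~\ref{cor:const-var-small-sigma}. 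Under this reading, $m_t(\mathbf{z})\,p_t(\mathbf{x}\mid\mathbf{z}) = \tilde p_t(\mathbf{x}\mid\mathbf{z})$ from \eqref{eq:mass-kernel}.

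First I expand both squared norms. For $\mathcal L_{\mathrm{UFM}}$, write
\[
\|\mathbf{u}^\theta_t-\mathbf{u}_t\|^2=\|\mathbf{u}^\theta_t\|^2-2\langle \mathbf{u}^\theta_t,\mathbf{u}_t\rangle+\|\mathbf{u}_t\|^2 .
\]
For $\mathcal L_{\mathrm{UCFM}}$, do the same with $\dot{\bm\mu}_t(\mathbf{z})$ in place of $\mathbf{u}_t$. The last terms, $\|\mathbf{u}_t\|^2$ and $\|\dot{\bm\mu}_t\|^2$, do not involve $\theta$, so they go into the constant. It then remains to match the quadratic and cross terms.

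\emph{Quadratic term.} By Fubini and \eqref{eq:pt-unbalanced-def},
\[
\int q(\mathbf{z})\,\frac{m_t(\mathbf{z})}{M_t}\int p_t(\mathbf{x}\mid\mathbf{z})\,\|\mathbf{u}^\theta_t(\mathbf{x})\|^2\,d\mathbf{x}\,d\mathbf{z}
=\frac{1}{M_t}\int \tilde p_t(\mathbf{x})\,\|\mathbf{u}^\theta_t(\mathbf{x})\|^2\,d\mathbf{x},
\]
which is exactly the quadratic term of \eqref{eq:UFM}.

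\emph{Cross term.} I use \eqref{eq:vp-identity-unbalanced}, which says $\mathbf{u}_t\tilde p_t=\int \dot{\bm\mu}_t(\mathbf{z})\,\tilde p_t(\mathbf{x}\mid\mathbf{z})\,q(\mathbf{z})\,d\mathbf{z}$. This gives
\[
\frac{1}{M_t}\int \tilde p_t(\mathbf{x})\,\langle \mathbf{u}^\theta_t(\mathbf{x}),\mathbf{u}_t(\mathbf{x})\rangle\,d\mathbf{x}
=\int q(\mathbf{z})\,\frac{m_t(\mathbf{z})}{M_t}\int p_t(\mathbf{x}\mid\mathbf{z})\,\langle\mathbf{u}^\theta_t(\mathbf{x}),\dot{\bm\mu}_t(\mathbf{z})\rangle\,d\mathbf{x}\,d\mathbf{z}.
\]
The positivity assumption $p_t>0$ is what makes the definition \eqref{eq:unbalanced-cond-vel} of $\mathbf{u}_t$ (division by $\tilde p_t$) legitimate, so that this identity holds pointwise. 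Combining the two terms shows $\mathcal L_{\mathrm{UFM}}-\mathcal L_{\mathrm{UCFM}}$ is independent of $\theta$, and differentiating gives the equality of gradients.

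\emph{Main obstacle.} The delicate point is rigor rather than algebra. With Dirac kernels, $p_t(\mathbf{x})>0$ everywhere fails literally, and Fubini needs integrability of $\|\mathbf{u}^\theta_t\|^2$, $\|\dot{\bm\mu}_t\|^2$ and $m_t$ against $q$. To handle this I would do the computation with $\sigma>0$, where the Gaussian kernel gives strictly positive densities. By Corollary~\ref{cor:const-var-small-sigma}(b) the conditional velocity is still $\dot{\bm\mu}_t$, so the identity holds exactly for each $\sigma$. I would then let $\sigma\to0$ using weak convergence, assuming $\mathbf{u}^\theta_t$ is continuous and bounded, or has suitable growth together with moment bounds on $q$. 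I would also assume $m_t$ and $\dot{\bm\mu}_t$ are square integrable under $q$, so that the $\theta$-independent constants are finite.
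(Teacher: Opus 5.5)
Your proposal is correct and follows essentially the same route as the paper: expand the square, rewrite the quadratic term using $\tilde p_t(\mathbf{x}\mid\mathbf{z}) = m_t(\mathbf{z})\,p_t(\mathbf{x}\mid\mathbf{z})$ and Fubini, rewrite the cross term by substituting the definition of the marginal velocity so that $\tilde p_t$ cancels, and drop the $\theta$-independent $\|\mathbf{u}_t\|^2$ term. Your rigor remarks go beyond the paper, which argues formally; they correctly note that $p_t>0$ fails literally for Dirac kernels. The simplest fix is not a $\sigma\to0$ limit of $\mathcal L_{\mathrm{UFM}}$, which needs convergence of the marginal velocities and is close to circular. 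Instead, read $\mathbf{u}_t$ at $\sigma=0$ as the conditional expectation of $\dot{\bm\mu}_t$ under the measure $m_t\,q$ pushed forward by $\bm\mu_t$; the cross-term identity then holds by definition.
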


\begin{proof}
\noindent Using the bilinearity of the squared norm, we can decompose the Unbalanced Flow Matching objective as,
\[
\|\bm{u}_{t}^\theta(\mathbf{x})-\bm{u}_t(\mathbf{x})\|^2
=
\|\bm{u}_t^\theta(\mathbf{x})\|^2
-2\langle \bm{u}_t^\theta(\mathbf{x}),\bm{u}_t(\mathbf{x})\rangle
+
\|\bm{u}_t(\mathbf{x})\|^2 .
\]
\noindent We then evaluate each term and its expectation under $p_t(\mathbf{x})$. We start with the term 
\begin{align}
     \frac{1}{M_t}\int_{\mathcal X}\tilde p_t(\mathbf{x})\,\|\mathbf{u}_t^\theta(\mathbf{x}))\|^2\ d\mathbf{x} &= \frac{1}{M_t} \int_{\mathcal X} \int_{\mathcal Z}\,\|\mathbf{u}_t^\theta(\mathbf{x})\|^2\  \tilde p_t(\mathbf{x} \vert \mathbf{z}) q(\mathbf{z}) d\mathbf{x} d\mathbf{z} \\
     &= \frac{1}{M_t} \int_{\mathcal X}\ \int_{\mathcal Z}\,\|\mathbf{u}_t^\theta(\mathbf{x}))\|^2\  m_t(\mathbf{z}) p_t(\mathbf{x\vert \mathbf{z}}) q(\mathbf{z}) d\mathbf{x} d\mathbf{z}  \\
    & = \frac{1}{M_t}\mathbb{E}_{q(\mathbf{z})}[m_t(\mathbf{z}) \mathbb{E}_{p_t(\mathbf{x\vert \mathbf{z}})} \Vert \mathbf{u}_t^\theta(\mathbf{x})\Vert^2]
\end{align}

\noindent Next,
\begin{align}
     \frac{1}{M_t}\int_{\mathcal X}\tilde p_t(\mathbf{x})\,\langle \mathbf{u}_t^\theta(\mathbf{x}), \mathbf{u}_t(\mathbf{x}) \rangle  \ d\mathbf{x} &=      \frac{1}{M_t} \int_{\mathcal X} \tilde p_t(\mathbf{x})\, \left\langle \mathbf{u}_t^\theta(\mathbf{x}), \frac{\int \dot{\mathbf{\mu}}_t(\mathbf{z})\, \tilde{p}_t(\mathbf{x}\mid \mathbf{z})\, q(\mathbf{z})\, d\mathbf{z}} {\tilde{p}_t(\mathbf{x})}
\right\rangle \, d\mathbf{x} \\
     &=  \frac{1}{M_t} \int_{\mathcal X} \, \left\langle \mathbf{u}_t^\theta(\mathbf{x}),\int_{\mathcal Z} \dot{\mathbf{\mu}}_t(\mathbf{z})\, \tilde{p}_t(\mathbf{x}\mid \mathbf{z})\, q(\mathbf{z})\, d\mathbf{z} 
\right\rangle \, d\mathbf{x} \\
     &=  \frac{1}{M_t} \int_{\mathcal X} \, \left\langle \mathbf{u}_t^\theta (\mathbf{x}),\int_{\mathcal Z} \dot{\mathbf{\mu}}_t(\mathbf{z})\, m_t(\mathbf{z}) {p}_t(\mathbf{x}\mid \mathbf{z})\, q(\mathbf{z})\, d\mathbf{z} 
\right\rangle \, d\mathbf{x} \\
     &=  \frac{1}{M_t} \int_{\mathcal X} \int_{\mathcal Z}\, \left\langle \mathbf{u}_t^\theta(\mathbf{x}),\dot{\mathbf{\mu}}_t(\mathbf{z}) \right\rangle\, m_t(\mathbf{z}) {p}_t(\mathbf{x}\mid \mathbf{z})\, q(\mathbf{z})\, d\mathbf{z} 
 \, d\mathbf{x} \\
    & = \mathbb{E}_{q(\mathbf{z})}\big [ \frac{m_t(\mathbf{z})}{M_t}  \mathbb{E}_{p_t(\mathbf{x\vert \mathbf{z}})} \left\langle \mathbf{u}_t^\theta(\mathbf{x}),\dot{\bm{\mu}}_t(\mathbf{z}) \right\rangle \big ]
\end{align}

\noindent The remaining term $\|\mathbf{u}_t(\mathbf{x})\|^2$ does not depend on $\theta$ and therefore contributes only a constant. Combining the terms yields the
conditional objective $\mathcal L_{\mathrm{UCFM}}(\theta)$ up to a constant, which completes the proof. \hfill 
\end{proof}

\newpage

\subsection{Convexity analysis of the flow matching objective}

\begin{proposition}[Convexity of FM/CFM objective for a OU process with isotropic diffusion]
Consider the probability-flow velocity of the Ornstein-Ulhenbeck process with no growth dynamics
\[
\mathbf{u}_t^{A}(\mathbf{x})
=
-\mathbf{A}\mathbf{x}
-
D\mathbf{s}_t(\mathbf{x}),
\]
where \( \mathbf{A}\in\mathbb{R}^{d\times d} \), \(D>0\) is fixed, and
\(\mathbf{s}_t(\mathbf{x})=\nabla \log p_t(\mathbf{x})\) is fixed. Then the flow matching objective is convex in \(\mathbf{A}\). If
\[
\boldsymbol{\Sigma}_{\mathrm{FM}}
=
\mathbb{E}_{t\sim \mathcal{U}[0,T]}\mathbb{E}_{\mathbf{x}\sim p_t(\mathbf{x})}
\left[\mathbf{x}\mathbf{x}^{\top}\right]
\succ 0,
\]
then the objective is strictly convex and admits a unique minimizer. Moreover, since the FM and CFM objectives differ only by a constant independent of \(\mathbf{A}\), the corresponding CFM objective has the same convexity and minimizers with respect to \(\mathbf{A}\).
\end{proposition}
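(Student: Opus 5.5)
The plan is to write the flow matching objective explicitly as a function of $\mathbf{A}$ and show that it is a quadratic form in $\mathrm{vec}(\mathbf{A})$. The regressed velocity is $\mathbf{u}_t^{\mathbf{A}}(\mathbf{x}) = -\mathbf{A}\mathbf{x} - D\mathbf{s}_t(\mathbf{x})$. Let the target be $\mathbf{u}_t(\mathbf{x})$, the marginal velocity from Eq.~(\ref{eq:cond-vel}), and set $\mathbf{r}_t(\mathbf{x}) := \mathbf{u}_t(\mathbf{x}) + D\mathbf{s}_t(\mathbf{x})$, which does not depend on $\mathbf{A}$. The FM loss then becomes
\[
\mathcal{L}_{\mathrm{FM}}(\mathbf{A}) = \mathbb{E}_{t}\mathbb{E}_{p_t}\big\|\mathbf{A}\mathbf{x} + \mathbf{r}_t(\mathbf{x})\big\|^2
= \mathrm{tr}\big(\mathbf{A}\boldsymbol{\Sigma}_{\mathrm{FM}}\mathbf{A}^\top\big) + 2\,\mathrm{tr}\big(\mathbf{A}\,\mathbf{C}\big) + c_0 ,
\]
where $\mathbf{C} = \mathbb{E}_t\mathbb{E}_{p_t}[\mathbf{x}\,\mathbf{r}_t(\mathbf{x})^\top]$ and $c_0 = \mathbb{E}\|\mathbf{r}_t\|^2$. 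I assume finite second moments so that every term is well defined.

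For convexity, note that the map $\mathbf{A}\mapsto \mathbf{A}\mathbf{x}+\mathbf{r}_t(\mathbf{x})$ is affine and $\|\cdot\|^2$ is convex, so each integrand is convex in $\mathbf{A}$. Expectation preserves convexity. The Hessian can also be read off directly: in $\mathrm{vec}(\mathbf{A})$ the quadratic part is $\mathrm{vec}(\mathbf{A})^\top(\boldsymbol{\Sigma}_{\mathrm{FM}}\otimes \mathbf{I}_d)\mathrm{vec}(\mathbf{A})$, so the Hessian is $2\,\boldsymbol{\Sigma}_{\mathrm{FM}}\otimes\mathbf{I}_d \succeq 0$.

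For strict convexity and uniqueness, assume $\boldsymbol{\Sigma}_{\mathrm{FM}}\succ0$. Then $\boldsymbol{\Sigma}_{\mathrm{FM}}\otimes\mathbf{I}_d\succ 0$, so the objective is a strictly convex, coercive quadratic. Setting the gradient $2\mathbf{A}\boldsymbol{\Sigma}_{\mathrm{FM}} + 2\mathbf{C}^\top$ to zero gives the unique minimizer $\mathbf{A}^* = -\mathbf{C}^\top\boldsymbol{\Sigma}_{\mathrm{FM}}^{-1}$.

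To transfer the result to CFM, I invoke the preceding theorem with $m_t\equiv 1$ and $M_t=1$ (the balanced case). It states that $\mathcal{L}_{\mathrm{FM}}$ and $\mathcal{L}_{\mathrm{CFM}}$ differ by a constant independent of the parameters, here $\mathbf{A}$. Adding a constant changes neither the Hessian nor the set of minimizers, so convexity, strict convexity and the minimizer carry over.

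The main subtlety is making sure this constant is genuinely independent of $\mathbf{A}$. The score term $D\mathbf{s}_t$ is fixed, so it enters only through $\mathbf{r}_t$, and in the CFM cross term it pairs with $\dot{\bm{\mu}}_t$ exactly as in the FM decomposition. I would check this explicitly by rerunning the bilinear expansion from the previous proof with $\mathbf{u}^\theta$ replaced by $-\mathbf{A}\mathbf{x}-D\mathbf{s}_t$.
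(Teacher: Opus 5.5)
Your proposal is correct and follows essentially the same route as the paper: the same reduction to $\mathbb{E}\|\mathbf{A}\mathbf{x}+\mathbf{r}_t(\mathbf{x})\|^2$, the same quadratic form $\operatorname{Tr}(\mathbf{A}\boldsymbol{\Sigma}_{\mathrm{FM}}\mathbf{A}^\top)+2\operatorname{Tr}(\mathbf{A}\mathbf{C})+\mathrm{const}$, and the same transfer to CFM via a constant, $\mathbf{A}$-independent offset. The differences are minor. You read convexity off the Hessian $2\boldsymbol{\Sigma}_{\mathrm{FM}}\otimes\mathbf{I}_d$ (or affine-composed-with-convex) rather than the paper's explicit convexity gap $\lambda(1-\lambda)\operatorname{Tr}(\mathbf{H}\boldsymbol{\Sigma}_{\mathrm{FM}}\mathbf{H}^\top)$, and your explicit minimizer $\mathbf{A}^*=-\mathbf{C}^\top\boldsymbol{\Sigma}_{\mathrm{FM}}^{-1}$ makes the existence half of ``admits a unique minimizer'' explicit, which the paper leaves implicit.
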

\begin{proof}
The FM objective \cite{lipman2022flow} is given by
\[
\mathcal{L}_{\mathrm{FM}}(\mathbf{A})
=
\mathbb{E}_{t \sim \mathcal{U}[0,T]}\mathbb{E}_{\mathbf{x}\sim p_t(\mathbf{x})}
\left[
\left\|
\mathbf{u}_t(\mathbf{x})
-
\mathbf{u}_t^{A}(\mathbf{x})
\right\|^2
\right],
\]
where, for the OU process, the probability flow velocity is
\[
\mathbf{u}_t^{A}(\mathbf{x})
=
-\mathbf{A}\mathbf{x}
-
D\mathbf{s}_t(\mathbf{x}).
\]
Substituting this expression into the FM objective yields
\[
\mathcal{L}_{\mathrm{FM}}(\mathbf{A})
=
\mathbb{E}_{t,\mathbf{x}}
\left[
\left\|
\mathbf{u}_t(\mathbf{x})
+
\mathbf{A}\mathbf{x}
+
D\mathbf{s}_t(\mathbf{x})
\right\|^2
\right] = \mathbb{E}_{t,\mathbf{x}}
\left[
\left\|
\mathbf{A}\mathbf{x}
+
\mathbf{r}_t(\mathbf{x})
\right\|^2
\right].
\]
where $\mathbf{r}_t(\mathbf{x})= \mathbf{u}_t(\mathbf{x})+ D\mathbf{s}_t(\mathbf{x})$. Note: For simply notation we refer to the expectation $\mathbb{E}_{t \sim \mathcal{U}[0,T]}\mathbb{E}_{\mathbf{x}\sim p_t(\mathbf{x})}$ with $\mathbb{E}_{t,\mathbf{x}}$.\\

\noindent Expanding the square term inside the expectation gives
\[
\left\|
\mathbf{A}\mathbf{x}
+
\mathbf{r}_t(\mathbf{x})
\right\|^2
=
\mathbf{x}^{\top}\mathbf{A}^{\top}\mathbf{A}\mathbf{x}
+
2\mathbf{r}_t(\mathbf{x})^{\top}\mathbf{A}\mathbf{x}
+
\|\mathbf{r}_t(\mathbf{x})\|^2.
\]
Taking expectations,
\[
\mathcal{L}_{\mathrm{FM}}(\mathbf{A})
=
\mathbb{E}_{t,\mathbf{x}}
\left[
\mathbf{x}^{\top}\mathbf{A}^{\top}\mathbf{A}\mathbf{x}
\right]
+
2\mathbb{E}_{t,\mathbf{x}}
\left[
\mathbf{r}_t(\mathbf{x})^{\top}\mathbf{A}\mathbf{x}
\right]
+
\mathbb{E}_{t,\mathbf{x}}
\left[
\|\mathbf{r}_t(\mathbf{x})\|^2
\right].
\]

\noindent Using the cyclic property of the trace, the quadratic term can be expressed as
\[
\mathbb{E}_{t,\mathbf{x}}
\left[
\mathbf{x}^{\top}\mathbf{A}^{\top}\mathbf{A}\mathbf{x}
\right]
=
\mathbb{E}_{t,\mathbf{x}}
\left[
\operatorname{Tr}
\left(
\mathbf{A}^{\top}\mathbf{A}\mathbf{x}\mathbf{x}^{\top}
\right)
\right] = 
\operatorname{Tr}
\left(
\mathbf{A}^{\top}\mathbf{A}
\boldsymbol{\Sigma}_{\mathrm{FM}}
\right),
\]
where
\[\boldsymbol{\Sigma}_{\mathrm{FM}}
= \mathbb{E}
_{t,\mathbf{x}} \left[ \mathbf{x} \mathbf{x}^\top \right].
\]
\noindent Using the cyclicity of the trace gives
\[
\operatorname{Tr}
\left(
\mathbf{A}^{\top}\mathbf{A}
\boldsymbol{\Sigma}_{\mathrm{FM}}
\right)
=
\operatorname{Tr}
\left(
\mathbf{A}\boldsymbol{\Sigma}_{\mathrm{FM}}\mathbf{A}^{\top}
\right).
\]
Similarly, the second term in the FM objective can we written as
\[
\mathbb{E}_{t,\mathbf{x}}
\left[
\mathbf{r}_t(\mathbf{x})^{\top}\mathbf{A}\mathbf{x}
\right]
=
\mathbb{E}_{t,\mathbf{x}}
\left[
\operatorname{Tr}
\left(
\mathbf{A}\mathbf{x}\mathbf{r}_t(\mathbf{x})^{\top}
\right)
\right] = \operatorname{Tr}(\mathbf{A} \mathbf{C}_\mathrm{FM})
\]
where $
\mathbf{C}_{\mathrm{FM}} = \mathbb{E}_{t,\mathbf{x}}
\left[
\mathbf{x}\mathbf{r}_t(\mathbf{x})^{\top}
\right] .
$ \\ 

\noindent We can rewrite the FM objective as a quadratic function of \(\mathbf{A}\),
\[
\mathcal{L}_{\mathrm{FM}}(\mathbf{A})
=
\operatorname{Tr}
\left(
\mathbf{A}\boldsymbol{\Sigma}_{\mathrm{FM}}\mathbf{A}^{\top}
\right)
+
2\operatorname{Tr}
\left(
\mathbf{A}\mathbf{C}_{\mathrm{FM}}
\right)
+
\mathrm{const},
\]
where the final term is independent of $\mathbf{A}$ and only depends on $\mathbf{r}_t(\mathbf{x})$.\\

\noindent We now prove convexity directly. Let
\(\mathbf{A}_1,\mathbf{A}_2\in\mathbb{R}^{d\times d}\) and
\(\lambda\in[0,1]\), and define $
\mathbf{A}_{\lambda} = \lambda \mathbf{A}_1 + (1-\lambda)\mathbf{A}_2.$
The linear term and the constant term cancel exactly in the convexity difference, so
\begin{align}
    \Delta & = 
\lambda \mathcal{L}_{\mathrm{FM}}(\mathbf{A}_1)
+
(1-\lambda)\mathcal{L}_{\mathrm{FM}}(\mathbf{A}_2)
-
\mathcal{L}_{\mathrm{FM}}(\mathbf{A}_{\lambda}) \\
& = 
\lambda
\operatorname{Tr}
\left(
\mathbf{A}_1\boldsymbol{\Sigma}_{\mathrm{FM}}\mathbf{A}_1^{\top}
\right)
+
(1-\lambda)
\operatorname{Tr}
\left(
\mathbf{A}_2\boldsymbol{\Sigma}_{\mathrm{FM}}\mathbf{A}_2^{\top}
\right)
-
\operatorname{Tr}
\left(
\mathbf{A}_{\lambda}
\boldsymbol{\Sigma}_{\mathrm{FM}}
\mathbf{A}_{\lambda}^{\top}
\right).
\end{align} 
Expanding \(\mathbf{A}_{\lambda}\) gives
\[
    \Delta 
=
\lambda(1-\lambda)
\operatorname{Tr}
\left(
(\mathbf{A}_1-\mathbf{A}_2)
\boldsymbol{\Sigma}_{\mathrm{FM}}
(\mathbf{A}_1-\mathbf{A}_2)^{\top}
\right) 
 =
\lambda(1-\lambda)\,
\operatorname{Tr}
\left(
\mathbf{H}\boldsymbol{\Sigma}_{\mathrm{FM}}\mathbf{H}^\top
\right),
\]
where $\mathbf{H} = \mathbf{A}_1 - \mathbf{A}_2$. \\

\noindent For $\boldsymbol{\Sigma}_{\mathrm{FM}}\succ 0$, and $\lambda (1-\lambda) \geq 0 $ the convexity gap 
\begin{equation}
\Delta = 
\lambda(1-\lambda)\operatorname{Tr}
\left(
\mathbf{H}\boldsymbol{\Sigma}_{\mathrm{FM}}\mathbf{H}^\top
\right)
=
\lambda(1-\lambda)\sum_{i=1}^d \mathbf{h}_i^\top \boldsymbol{\Sigma}_{\mathrm{FM}}\mathbf{h}_i \geq 0,
\end{equation}
is nonnegative, with \(\mathbf{h}_i\) being the rows of \(\mathbf{H}\). Moreover, if \(\boldsymbol{\Sigma}_{\mathrm{FM}}\succ 0\), then
\[
\operatorname{Tr}
\left(
\mathbf{H}\boldsymbol{\Sigma}_{\mathrm{FM}}\mathbf{H}^\top
\right)
> 0
\quad \text{for all } \mathbf{H}\neq 0,
\]
so \(\Delta = 0\) for \(\lambda\in(0,1)\) only when \(\mathbf{A}_1=\mathbf{A}_2\). Hence the objective is strictly convex and admits a unique minimizer.\\

\noindent Finally, the CFM objective can be expanded as 
\begin{align}
    \mathcal{L}_{\mathrm{CFM}}(\mathbf{A})
& =
\mathbb{E}_{t,\mathbf{z}\sim q(\mathbf{z})}
\mathbb{E}_{\mathbf{x}\sim p_t(\mathbf{x}\mid \mathbf{z})}
\left[
\left\|
\mathbf{u}_t(\mathbf{x}\mid \mathbf{z})
-
\mathbf{u}_t^{A}(\mathbf{x})
\right\|^2
\right],\\
& = \mathcal{L}_{\mathrm{FM}}(\mathbf{A})
+
\mathbb{E}_{t,z,\mathbf{x}}
\left[
\|\mathbf{u}_t(\mathbf{x}\mid \mathbf{z})\|^2
\right]
-
\mathbb{E}_{t,\mathbf{x}}
\left[
\|\mathbf{u}_t(\mathbf{x})\|^2
\right],
\end{align}
where the last two terms are independent of \(\mathbf{A}\). Therefore,
\[
\nabla_{\mathbf{A}}\mathcal{L}_{\mathrm{CFM}}(\mathbf{A})
=
\nabla_{\mathbf{A}}\mathcal{L}_{\mathrm{FM}}(\mathbf{A}),
\]
and CFM inherits the same convexity and minimizers with respect to \(\mathbf{A}\).
\end{proof}

\begin{corollary}[OU--Gaussian case]
For the OU process
\[
d\mathbf{x}_t=-\mathbf A\mathbf{x}_t\,dt+\sqrt{2D}\,d\mathbf W_t,
\]
with \(\mathbf A\succeq 0\) and fixed \(D>0\) and Gaussian initial distribution, the marginals remain Gaussian,
\[
p_t(\mathbf{x})=\mathcal{N}(\mathbf m_t,\mathbf{\Sigma}_t)
\quad \forall \: t.
\]
Hence the score function at time $t$ is $\mathbf{s}_t(\mathbf{x})=-\mathbf{\Sigma}_t^{-1}(\mathbf{x}-\mathbf m_t)$, and the FM second-moment matrix satisfies
\[
\mathbb{E}_{t}\mathbb{E}_{\mathbf{x}\sim p_t}
[\mathbf{x}\mathbf{x}^\top]
=
\mathbb{E}_t[\mathbf{\Sigma}_t+\mathbf m_t\mathbf m_t^\top].
\]
Consequently, using the linear parameterization of the probability-flow velocity, the FM and CFM objectives are convex quadratic functions of \(\mathbf A\), and are strictly convex whenever
\[
\mathbb{E}_t[\Sigma_t+\mathbf m_t\mathbf m_t^\top]\succ 0.
\]
The positive-semidefinite constraint \(\mathbf A\succeq 0\) ensures stability of the OU dynamics but is not required for convexity of the regression objective.
\end{corollary}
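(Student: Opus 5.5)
The corollary follows from the preceding proposition once we verify three things: the marginals are Gaussian, the score has the stated form, and the second-moment matrix is as written.

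\emph{Gaussianity of the marginals.} The OU SDE $d\mathbf{x}_t=-\mathbf A\mathbf{x}_t\,dt+\sqrt{2D}\,d\mathbf W_t$ is linear with additive noise. Its solution can be written by variation of constants:
\[
\mathbf{x}_t=e^{-\mathbf A t}\mathbf{x}_0+\sqrt{2D}\int_0^t e^{-\mathbf A(t-s)}\,d\mathbf W_s .
\]
This is an affine image of the Gaussian $\mathbf{x}_0$ plus an independent Wiener integral of a deterministic kernel, which is itself Gaussian. Hence $p_t=\mathcal N(\mathbf m_t,\mathbf\Sigma_t)$ with
\[
\dot{\mathbf m}_t=-\mathbf A\mathbf m_t,\qquad \dot{\mathbf\Sigma}_t=-\mathbf A\mathbf\Sigma_t-\mathbf\Sigma_t\mathbf A^\top+2D\,\mathbf I .
\]
This can be double-checked by inserting a Gaussian ansatz into the Fokker--Planck equation \eqref{eq:FP} with $g_t=0$. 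The term $2D\int_0^t e^{-\mathbf A s}e^{-\mathbf A^\top s}ds$ contributes to $\mathbf\Sigma_t$ and is positive definite for $t>0$, since $D>0$. So $\mathbf\Sigma_t$ is invertible for all $t>0$ (and at $t=0$ as well if the initial covariance is nondegenerate), and the density is well defined.

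\emph{Score and second moment.} Differentiating $\log p_t$ gives $\mathbf s_t(\mathbf x)=-\mathbf\Sigma_t^{-1}(\mathbf x-\mathbf m_t)$. For each $t$ we have $\mathbb E_{p_t}[\mathbf x\mathbf x^\top]=\mathbf\Sigma_t+\mathbf m_t\mathbf m_t^\top$. Averaging over $t\sim\mathcal U[0,T]$ yields the stated expression for $\boldsymbol\Sigma_{\mathrm{FM}}$; the exchange of expectations is justified because $\mathbf m_t$ and $\mathbf\Sigma_t$ are continuous, hence bounded, on $[0,T]$.

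\emph{Convexity.} The probability-flow velocity is $-\mathbf A\mathbf x-D\mathbf s_t(\mathbf x)$, which is affine in $\mathbf A$ once the score is held fixed, as the proposition assumes. Invoking the convexity proposition directly, $\mathcal L_{\mathrm{FM}}$ and $\mathcal L_{\mathrm{CFM}}$ are convex quadratics in $\mathbf A$. They are strictly convex with a unique minimizer whenever $\mathbb E_t[\mathbf\Sigma_t+\mathbf m_t\mathbf m_t^\top]\succ0$.

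In fact this positivity holds automatically here: for every $t>0$ we have $\mathbf\Sigma_t\succ0$, and averaging over a set of positive measure preserves strict positive definiteness. The condition $\mathbf A\succeq0$ never enters the quadratic form $\operatorname{Tr}(\mathbf H\boldsymbol\Sigma_{\mathrm{FM}}\mathbf H^\top)$. It is needed only for stability, ensuring $e^{-\mathbf A t}$ does not blow up, so it plays no role in convexity.

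The only step needing care is confirming Gaussianity and invertibility of $\mathbf\Sigma_t$, so that the score is well defined. The rest is direct substitution into the preceding proposition.
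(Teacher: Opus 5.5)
Your proof is correct and follows the paper's route: the paper gives no separate argument for this corollary, treating it as a direct specialization of the preceding convexity proposition once the OU marginals are known to be Gaussian, which is exactly what you verify and then substitute. Your extra remark is also valid and slightly sharpens the stated condition: $D>0$ forces $\mathbf{\Sigma}_t\succ 0$ for all $t>0$, so $\mathbb{E}_t[\mathbf{\Sigma}_t+\mathbf{m}_t\mathbf{m}_t^\top]\succ 0$ holds automatically.
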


\newpage

\section{Simulation studies}

\subsection{Two-dimensional Ornstein--Uhlenbeck process}

We consider a two-dimensional Ornstein--Uhlenbeck (OU) process governed by the linear stochastic differential equation
\begin{equation}
    d\mathbf{x}_t = -\mathbf{B} \mathbf{x}_t\, dt + \sqrt{2D}\, d\mathbf{W}_t,
\end{equation}
where \( \mathbf{x}_t \in \mathbb{R}^2 \), \( \mathbf{B} \in \mathbb{R}^{2 \times 2} \) is a positive semi-definite drift matrix, and \( D > 0 \) is an isotropic diffusion coefficient. The drift matrix is constructed as \( \mathbf{B} = \mathbf{A A}^\top \), where \( \mathbf{A} \in \mathbb{R}^{2 \times 2} \) has entries sampled independently from a standard normal distribution. This ensures that \( B \) is symmetric positive semi-definite, yielding stable linear dynamics.\\

\noindent Trajectories are simulated using an Euler--Maruyama discretization,
\begin{equation}
    \mathbf{x}_{t+\Delta t} = \mathbf{x}_t - \Delta t\, \mathbf{B} \mathbf{x}_t + \sqrt{2D \Delta t}\, \boldsymbol{\xi}_t,
\end{equation}
where \( \boldsymbol{\xi}_t \sim \mathcal{N}(0, \mathbf{I}_2) \). We set \( D = 5 \), use a time step \( \Delta t = 0.01 \). and simulate trajectories for a total of \( T = 1 \).   For the drift matrix \(\mathbf{B}\) shown in Fig.~\ref{fig:ou-interpolation}, with eigenvalues approximately \(\lambda_1= 3.08\) and \(\lambda_2 = 0.16\), the dynamics are strongly anisotropic.\\


\noindent With initial conditions drawn from a Gaussian distribution centered at
\[
\mathbf{x}_0 \sim \mathcal{N}(\bm{\mu}_0, \sigma_0^2 \mathbf{I}_2), \quad \bm{\mu}_0 = (80,80), \: \sigma_0 = 4
\]
the system exhibits nontrivial stochastic dynamics. In particular, we observe drift-dominated motion along one eigendirection and diffusion-influenced dynamics along the other, as illustrated in Fig.~\ref{fig:ou-interpolation}. Cross-sectional snapshots (\(K=4\)) are recorded at regular intervals with spacing \(0.2\), yielding time-resolved snapshots \( p_{t_k}(\mathbf{x}) \).

\subsection{High-dimensional bifurcating stochastic dynamics}

\noindent We simulate stochastic dynamics in a high-dimensional bifurcating landscape governed by the Itô SDE
\begin{equation}
    d\mathbf{x} = -\nabla \phi(\mathbf{x})\, dt  + \sqrt{2D}\, d\mathbf{W}_t,
\end{equation}
where \( D > 0 \) is an isotropic diffusion coefficient. The potential \( \phi(\mathbf{x}) \) is defined through a collection of \(M\) soft attractors \( \{ \mathbf{a}_m \}_{m=1}^M \subset \mathbb{R}^d \),
\begin{equation}
    \phi(\mathbf{x}) = \sum_{m=1}^{M} \frac{w_m(\mathbf{x})}{2\sigma^2} \|\mathbf{x} - \mathbf{a}_m\|^2, \quad \text{where} \quad     w_m(\mathbf{x}) = \frac{\exp\left(-\|\mathbf{x} - \mathbf{a}_m\|^2 / (2\sigma^2)\right)}{\sum_{j=1}^M \exp\left(-\|\mathbf{x} - \mathbf{a}_j\|^2 / (2\sigma^2)\right)}.
\end{equation}
This yields a smooth multi-well landscape in which the induced force field corresponds to a weighted superposition of linear attractions toward the centers \( \mathbf{a}_k \). To generate synthetic cross-sectional data, we simulate the dynamics using an Euler-Maruyama discretization,
\begin{equation}
    \mathbf{x}_{t+\Delta t} = \mathbf{x}_t + \Delta t \left( -\nabla \phi(\mathbf{x}_t) \right) + \sqrt{2D \Delta t}\, \boldsymbol{\xi}_t,
\end{equation}
where \( \boldsymbol{\xi}_t \sim \mathcal{N}(0, \mathbf{I}_d) \). We set $D=0.5$, use a time step \( \Delta t = 0.01 \), and simulate trajectories up to a final time \( T = 1 \). The choice of the diffusion coefficient \(D\) and the attractor width \(\sigma\) sets the relative strength of stochastic and deterministic effects. For the parameters used here (\(D=0.5\), \(\sigma=0.5\)), a simple scaling analysis shows that drift- and diffusion-induced displacements are comparable over the observation time scale, placing the system in a regime where both contribute meaningfully to the dynamics. The attractors \( \{\mathbf{a}_m\}_{m=1}^M \) are chosen to be axis-aligned in \( \mathbb{R}^d \), inducing a combinatorial branching structure in which different subsets of coordinates define distinct lineage directions. Simulations are initialized with samples drawn from a narrow isotropic Gaussian centered at the origin,
$
\mathbf{x}_0 \sim \mathcal{N}(0, \sigma_0^2 \mathbf{I}_d)$ with $\sigma_0 = 0.01$.\\

\noindent Cross-sectional snapshots are recorded at \(K=5\) uniformly spaced time points with interval \(0.2\), yielding time-resolved samples from a high-dimensional stochastic branching process.

\section{Score estimation, validation, and flow matching details}
\label{sec:SM_score}
We employ denoising score matching (DSM) \cite{vincent2011connection, song2019generative} to estimate the score function of a high-dimensional empirical distribution. DSM proceeds by corrupting the input $\mathbf{x}$ with Gaussian noise: $\tilde{\mathbf{x}} \sim \mathcal{N}(\mathbf{x}, \sigma^2 \mathbf{I}_d)$. The resulting objective, aggregated over multiple noise scales, is given by:
\begin{equation}
\mathcal{\ell}(\phi, t_k) = 
\frac{1}{2} \sum_{i=1}^{L} \sigma_i^2 \, \mathbb{E}_{\mathbf{x}(t_k)} \, \mathbb{E}_{\tilde{\mathbf{x}} \sim \mathcal{N}(\mathbf{x}(t_k), \sigma_i^2 I)} 
\left[ \left\| \mathbf{s}_\phi(\tilde{\mathbf{x}}, \sigma_i, t_k) + \frac{\tilde{\mathbf{x}} - \mathbf{x}(t_k)}{\sigma_i^2} \right\|_2^2 \right].\label{eq:DSM}
\end{equation}
Here, the score function \( \mathbf{s}_\phi(\mathbf{x}, \sigma, t): \mathbb{R}^{d+1} \times [t_0,t_{K-1}] \to \mathbb{R}^d \) is parameterized by a feed-forward neural network with five hidden layers, each consisting of 100 nodes with ELU activations. For time-resolved cross-sectional data, we define the overall training objective by summing the losses across all time points:
\begin{equation}
\min_\phi \sum_{k=0}^{K-1} \lambda_k \, \mathcal{\ell}(\phi, t_k),
\end{equation}
where \( \lambda_k \) are adaptive weights \cite{maddu2022inverse} and \( t \sim \mathcal{U}(t_0, t_{K-1}) \). In all experiments, the noise levels \( \{ \sigma_i \}_{i=1}^L \) follow a geometric progression, with \( L = 5 \), \( \sigma_1 = 10 \), and \( \sigma_L = 0.01 \) \cite{song2019generative}. Optimization
is performed using the Adam optimizer with an initial learning rate $\eta = 10^{-3}$. We encourage readers to explore the accompanying code.\\

\paragraph*{Score validation:} After training a neural network to approximate the score function  \( \mathbf{s}_\theta(\mathbf{x},\sigma_L) \approx \nabla_{\mathbf{x}} \log p_{\text{data}}(\mathbf{x}) \), Langevin dynamics can be used to sample from the target distribution \( p_{\text{data}}(\mathbf{x}) \). Starting with a fixed step size \( \epsilon > 0 \) and an initial value \( \tilde{\mathbf{x}}_0 \sim \pi(\mathbf{x}) \), where \( \pi \) is a prior distribution, the Langevin method iteratively updates the samples using the equation:
\begin{equation}
    \tilde{\mathbf{x}}_\tau = \tilde{\mathbf{x}}_{\tau-1} + \frac{\epsilon}{2} \nabla_{\mathbf{x}} \log p(\tilde{\mathbf{x}}_{\tau-1}) + \sqrt{\epsilon} \mathbf{z}_\tau,
\end{equation}
where \( \mathbf{z}_\tau \sim \mathcal{N}(0, \mathbf{I}_d) \). Under certain conditions, as \( \epsilon \to 0 \) and \( \tau \to \infty \), the distribution of \( \tilde{\mathbf{x}}_\tau \) converges to \( p(\mathbf{x}) \), resulting in exact samples from \( p(\mathbf{x}) \) \cite{welling2011bayesian}. For finite \( \epsilon \) and \( \tau \), a Metropolis-Hastings update is often used to correct the approximation, though this correction is typically negligible for small \( \epsilon \) and large \( \tau \) \cite{song2019generative}. This sampling approach relies only on the score function \( \nabla_{\mathbf{x}} \log p(\mathbf{x}) \). We validate the score model by running Langevin dynamics to generate samples and evaluate how accurately they match the cross-sectional data.

\subsection*{Training of different models}

All models are trained using the probability flow matching procedure described in Algorithm~1, differing only in the choice of drift and diffusion parameterizations specified in Table~\ref{tab:models}. The force or the regulatory model $\mathbf{h}_\theta \in \mathbb{R}^d$ is parameterized by a feed-forward neural network with four spectrally normalized hidden layers, each consisting of 100 nodes with ELU activations. The spectral normalization  promotes stable training and controlled Lipschitz behavior of the learned dynamics.  Optimization
is performed using the Adam optimizer with an initial learning rate $\eta = 10^{-3}$.

\setlength{\tabcolsep}{1.25em}
{\renewcommand{\arraystretch}{1.25}
\begin{table}[h!]
\centering
\caption{Generative models of cell dynamics}
\label{tab:models}
\begin{tabular}{lll}
\hline
\textbf{Method} &  $\mathbf{f}(\mathbf{x},t)$ &  $\mathbf{D}(\mathbf{x},t)$ \\
\hline
CLE & $\mathbf{h(x)} - \ell  \mathbf{x}$ & $\frac{1}{2}\text{diag}\left(h_1(\mathbf{x}) + \ell x_1,\cdots, h_d(\mathbf{x}) + \ell x_d\right)$ \\
Multiplicative & $\mathbf{h(x)} - \ell  \mathbf{x}$  & $\sqrt{\mathbf{x}}$\\
Additive & $\mathbf{h(x)} - \ell  \mathbf{x}$  & $\mathbf{I}$ \\
ODE  & $\mathbf{h(x)} - \ell  \mathbf{x}$ & $0$\\
TrajNet++/TIGON++  &  $\mathbf{h}(\mathbf{x},t )- \ell  \mathbf{x}$  & $0$ \\
PRESCIENT   & $-\nabla \phi(\mathbf{x})- \ell  \mathbf{x}$  & $\kappa^2$ \\
\hline
\end{tabular}
\vspace{0.1em}
\end{table}

\vspace{1em}

\begin{figure*}
  \centering
  \includegraphics[width = 55em]{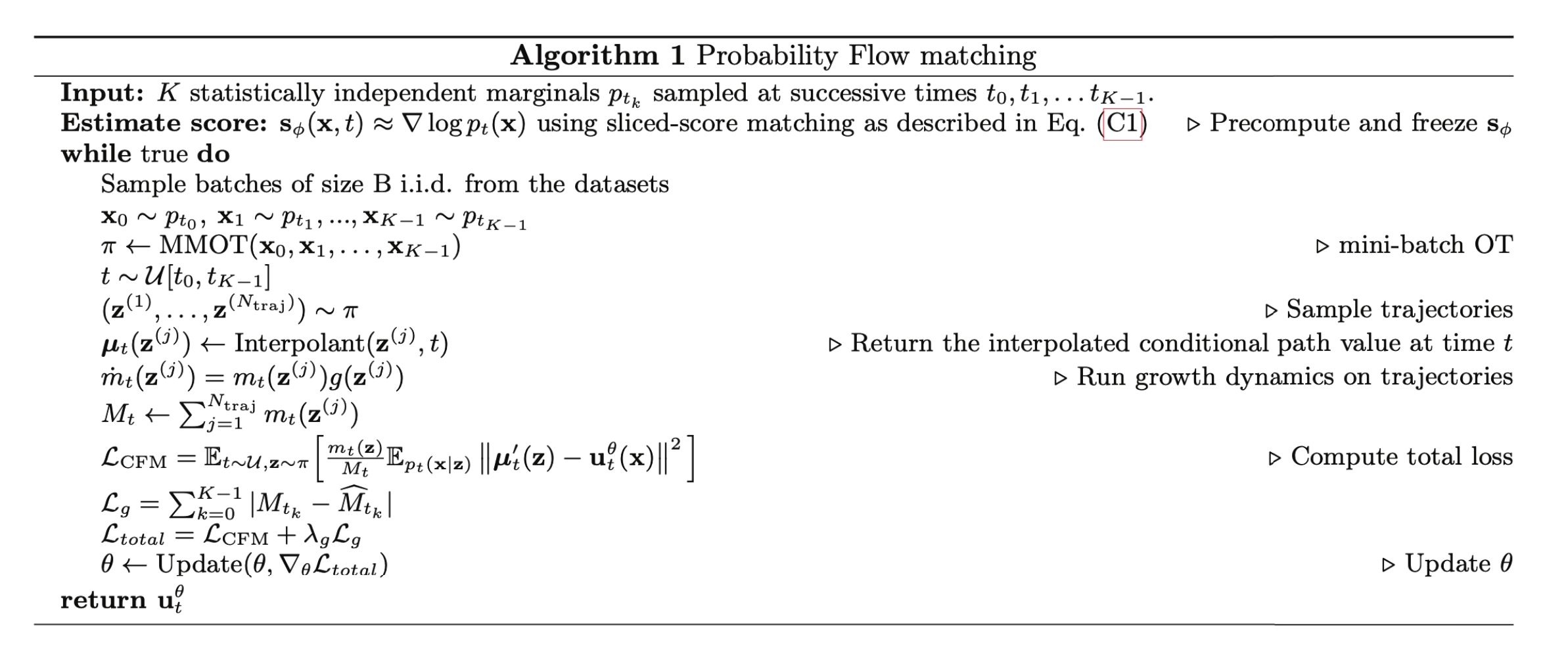}
\end{figure*}

\newpage
\begin{figure*}
  \centering
  \includegraphics[width = 52em]{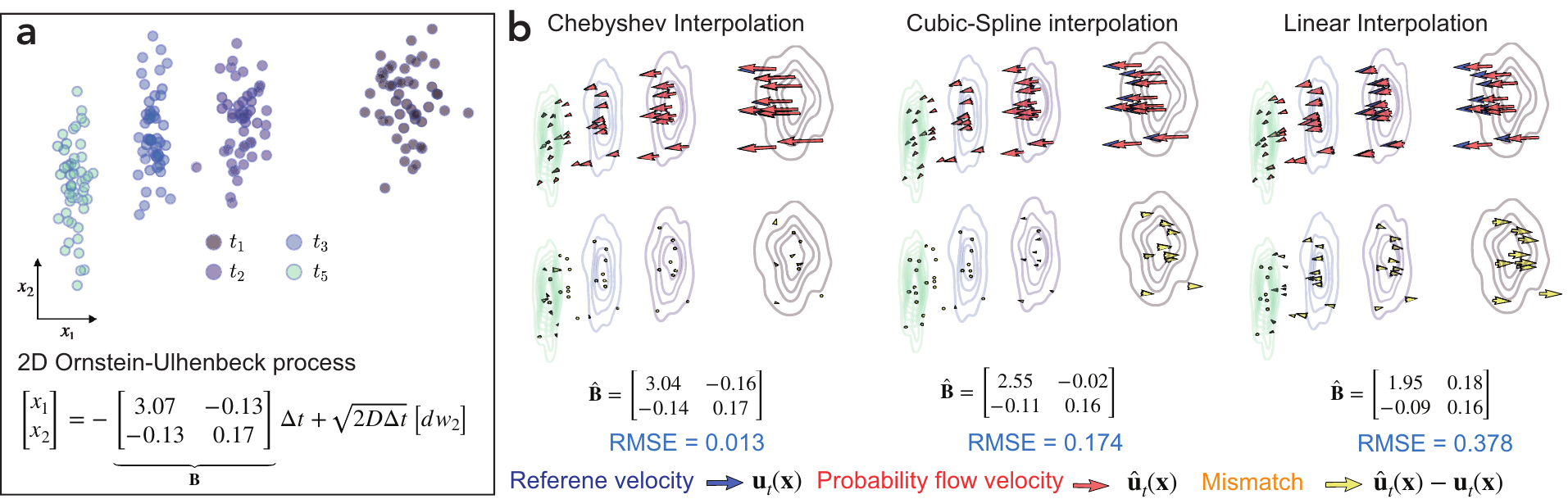}
\caption{
    \textbf{Evaluating the impact of interpolation schemes on drift inference from time-resolved snapshot data of a 2D Ornstein-Uhlenbeck process.}
    \textbf{(a)} Snapshots of particle positions at five time points $(t_1, t_2, \ldots, t_4)$ drawn from a 2D Ornstein--Uhlenbeck process governed by the linear drift matrix $\mathbf{B}$ and stochastic noise.
    \textbf{(b)} Comparison of Chebyshev, cubic spline, and linear interpolation methods to evaluate the conditional flow velocities to regress against in conditional flow matching. Top row: Probability flow velocities $\hat{\mathbf{u}}_t(\mathbf{x})$ (red arrows) overlaid on probability density contours at each time point. Bottom row: Inferred drift matrices $\hat{\mathbf{B}}$ and corresponding root mean squared errors (RMSE). Chebyshev interpolation yields the most accurate estimates, with the lowest RMSE and minimal mismatch between conditional velocity $\mathbf{u}_t(\mathbf{x})$ (blue arrows) and estimated flow $\hat{\mathbf{u}}_t(\mathbf{x})$.
    Yellow arrows represent the mismatch $\hat{\mathbf{u}}_t(\mathbf{x}) - \mathbf{u}_t(\mathbf{x})$. }
    \label{fig:ou-interpolation}
\end{figure*}

\begin{figure*}
  \centering
  \includegraphics[width = 16cm]{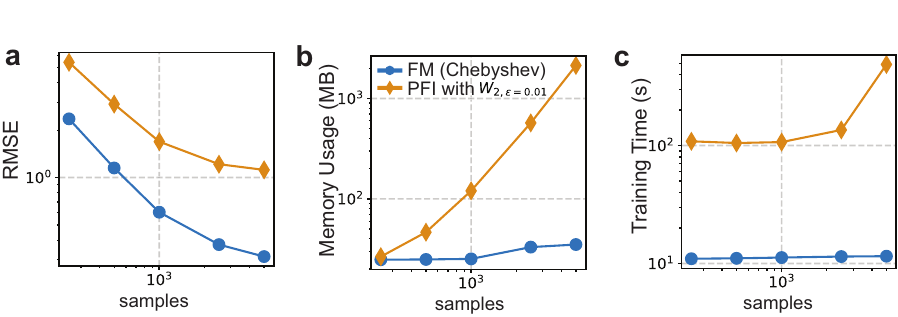}
\caption{
    \textbf{Simulation-based versus simulation-free inference.} Comparison of simulation-based Probability Flow Inference (PFI) and simulation-free Probability Flow Matching (PFM) on the differentiation landscape problem described in the main text (Fig.~\ref{fig:Bifur-interpolation}). In PFM, the mean path $\mu_t$ is parameterized using Chebyshev interpolants. (a) RMSE of the inferred force field ($\nabla \phi$) as a function of samples per snapshot; (b) corresponding memory usage; (c) training time.}
    \label{fig:compute}
\end{figure*}

\begin{figure*}
  \centering
  \includegraphics[width = \textwidth]{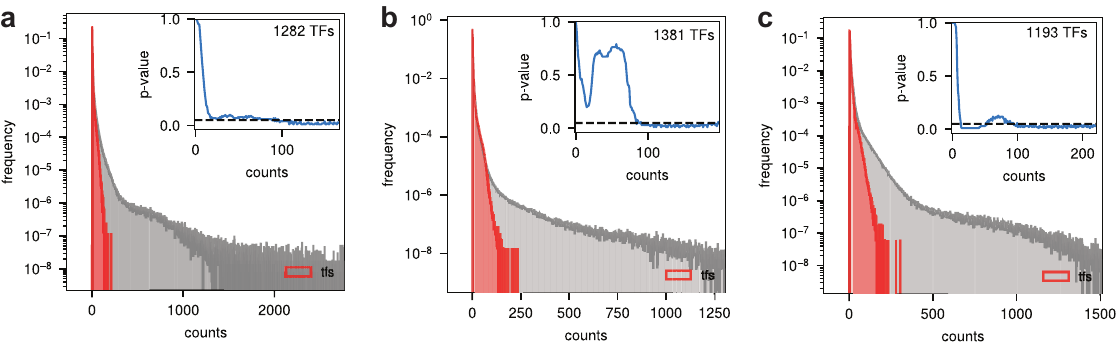}
\caption{\textbf{Distributions of counts for TF genes and non-TF genes.} In grey is the average count distribution over random sets of non-TF genes of the same size as the set of TF genes used compute the count distribution in red. The number of TF genes used is shown in the inset. The inset shows the p-value for each bin in the count distribution. For each bin, the p-value quantifies how often a random set of non-TF genes had less counts than the set of TF genes. The dashed line is the $0.05$ threshold.
 \textbf{(a)} Ex-vivo differentiation dataset \cite{georgolopoulos2021discrete} \textbf{(b)} Human CITE-seq \cite{demeo2025active} HSC differentiation dataset \textbf{(c)} Mouse HSC differentiation dataset \cite{weinreb2020lineage}.}
\label{fig:tfs_rest}
\end{figure*}

\begin{figure}
  \centering
  \includegraphics[width = 14cm]{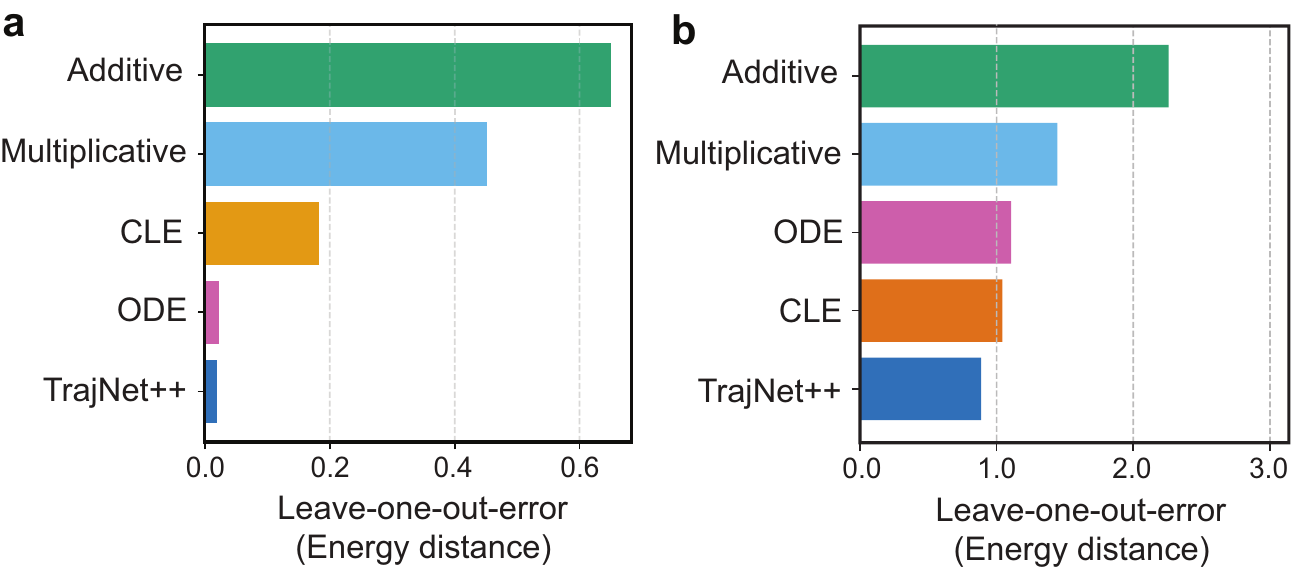}
\caption{
    \textbf{Interpolation accuracy:} (a) Interpolation accuracy, quantified by leave-one-out energy distance, for different dynamical models fitted to the 6-day \textit{in vitro} Mouse HSC differentiation dataset \cite{weinreb2020lineage}. (b) Interpolation accuracy on the 10-day \textit{in vitro} CITE-seq data capturing HSC differentiation\cite{demeo2025active}. 
    }
    \label{fig:interpolation_error}
\end{figure}

\begin{figure*}
  \centering
  \includegraphics[width = 35em]{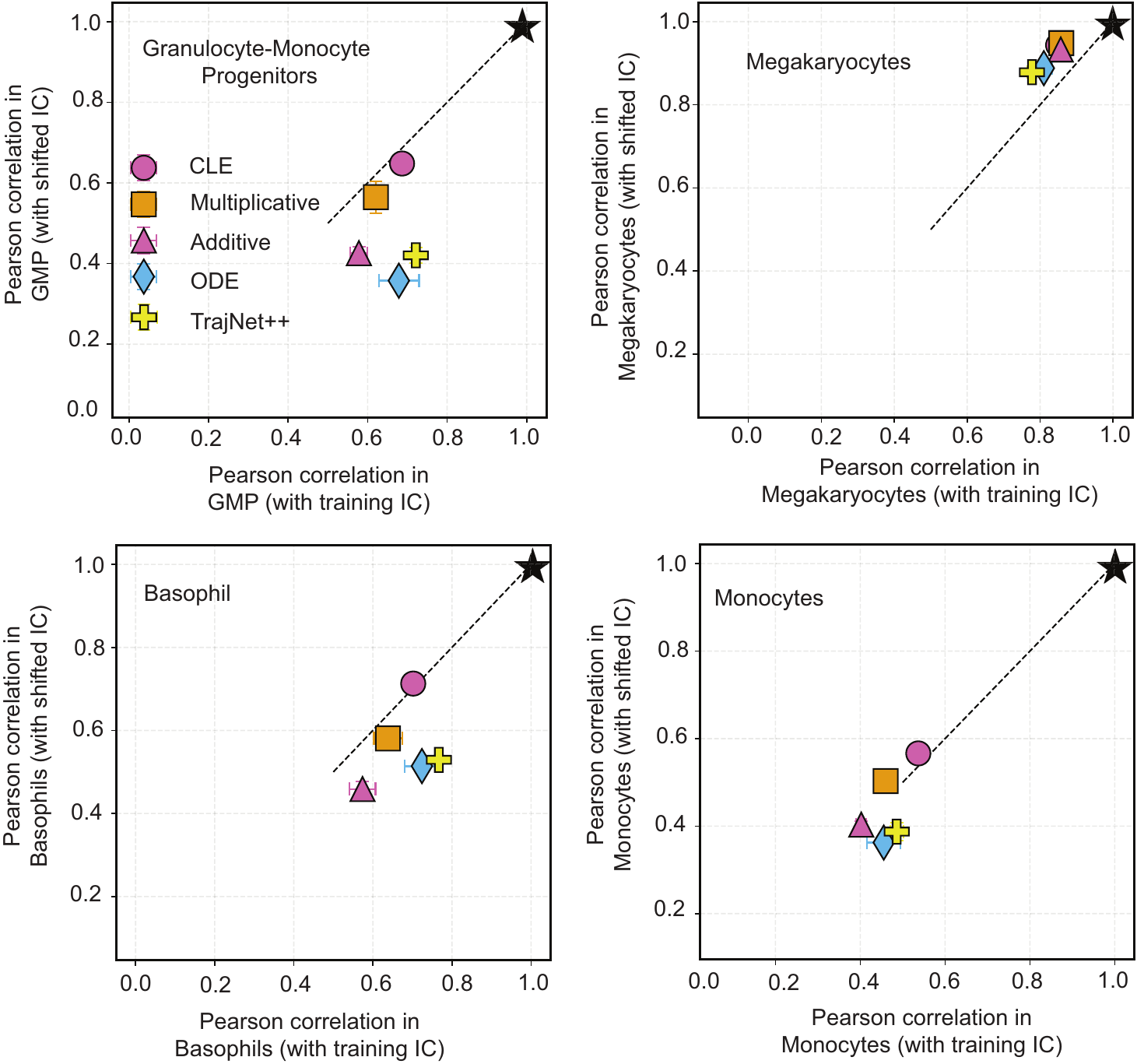}
\caption{\textbf{Generalization results.} Comparison of generalization performance across models trained on human CITE-seq dataset \cite{demeo2025active} and tested for generalization on the \textit{ex vivo} data \cite{georgolopoulos2021discrete}. The plots show the Pearson correlation between training and test performance when initialized from HSC/HSPC states across different cell-types present in the train data. The star symbol indicates perfect generalization.
}
\label{fig:kaggle_generalization}
\end{figure*}

\begin{figure}
  \centering
  \includegraphics[width = 38em]{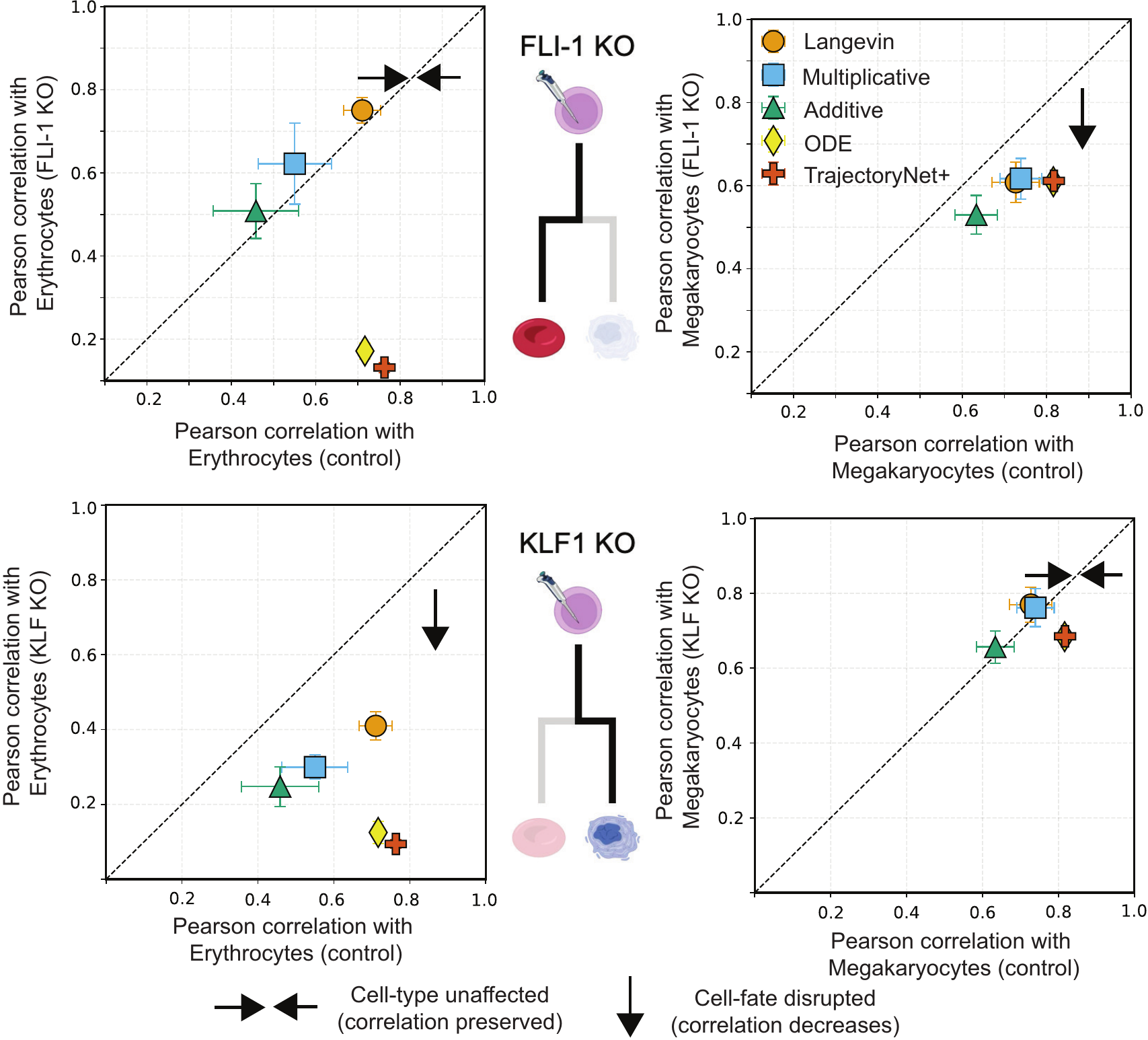}
\caption{
    \textbf{In-silico perturbation of the inferred transcriptome model:}
    \textbf{(a)} Each panel compares the Pearson correlation of knockout (KO) populations with control Erythrocytes and Megakaryocytes. KLF1-KO denotes knockout of the KLF1 gene, which impairs erythroid differentiation but leaves the megakaryocytic lineage largely intact. FLI1-KO denotes knockout of the FLI1 gene, which disrupts megakaryocyte formation while preserving erythroid identity. Arrows indicate these expected (``ground-truth'') biological outcomes. Points near the diagonal correspond to preserved correlations (unaffected fates), whereas points below the diagonal indicate reduced correlations and lineage disruption.
    }
    \label{fig:KO}
\end{figure}

\begin{table}[h!]
\centering
\caption{Abbreviations of cell types used in Fig.~3C.}
\label{tab:cell_types}
\begin{tabular}{ll}
\hline
\textbf{Abbreviation} & \textbf{Cell type} \\
\hline
HSC  & Hematopoietic stem cell \\
HSPC & Hematopoietic stem and progenitor cell \\
MEMP & Megakaryocyte-erythroid-myeloid progenitor \\
MPP  & Multipotent progenitor \\
Ery  & Erythrocytes \\
Mk   & Megakaryocytes \\
MPP1 & MPPs with megakaryocytic signature \\
MPP2 & MPPs with erythroid signature \\
Mk1  & Cells with early megakaryocytic signature \\
Mk2  & Mature megakaryocytes \\
Ery1 & Cells with early erythroid signature \\
Ery2 & Mature erythrocytes \\
\hline
\end{tabular}
\end{table}

\begin{figure}
  \centering
  \includegraphics[width = 57em]{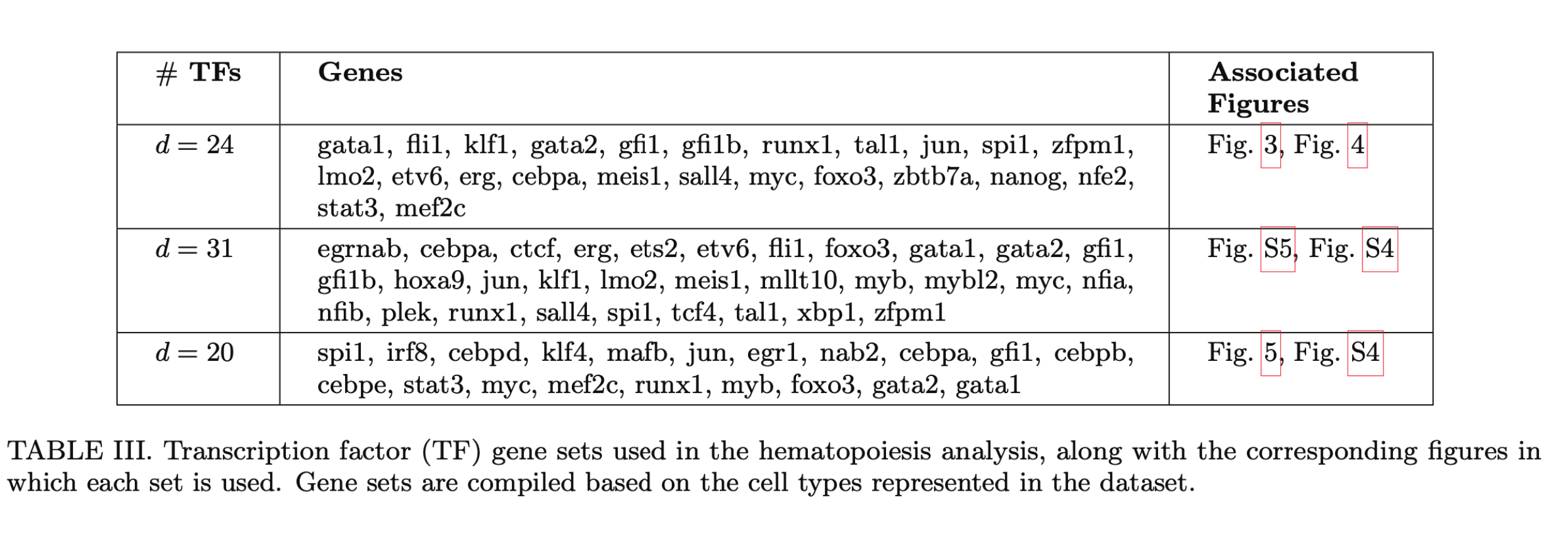}
\end{figure}


\end{document}